\newtheorem{proposition}{Proposition}
\newtheorem{theorem}{Theorem}
\newtheorem{lemma}{Lemma}
\newtheorem{definition}{Definition}
\newtheorem{claim}{Claim}
\newcommand\given[1][]{\:#1\vert\:}
\title{On the Optimality of the Kautz-Singleton Construction in Probabilistic Group Testing}
\author{Huseyin A. Inan$^{1}$, Peter Kairouz$^{1}$, Mary Wootters$^{2}$, and Ayfer Ozgur$^{1}$
\thanks{$^{1}$Department of Electrical Engineering, Stanford University.
        {\tt\small hinan1, kairouzp, aozgur@stanford.edu}}%
\thanks{$^{2}$Departments of Computer Science and Electrical Engineering,
Stanford University.
        {\tt\small marykw@stanford.edu}}%
        \thanks{This work appeared in part at Allerton 2018.}%
}
\begin{document}

\maketitle

\begin{abstract}
We consider the probabilistic group testing problem where $d$ random defective items in a large population of $N$ items are identified with high probability by applying binary tests. It is known that $\Theta(d \log N)$ tests are necessary and sufficient to recover the defective set with vanishing probability of error when $d = O(N^{\alpha})$ for some $\alpha \in (0, 1)$. However, to the best of our knowledge, there is no explicit (deterministic) construction achieving $\Theta(d \log N)$ tests in general. In this work, we show that a famous construction introduced by Kautz and Singleton for the combinatorial group testing problem (which is known to be suboptimal for combinatorial group testing for moderate values of $d$) achieves the order optimal $\Theta(d \log N)$ tests in the probabilistic group testing problem when $d = \Omega(\log^2 N)$. This provides a strongly explicit construction achieving the order optimal result in the probabilistic group testing setting for a wide range of values of $d$. To prove the order-optimality of Kautz and Singleton's construction in the probabilistic setting, we provide a novel analysis of the probability of a non-defective item being covered by a random defective set directly, rather than arguing from combinatorial properties of the underlying code, which has been the main approach in the literature. Furthermore, we use a recursive technique to convert this construction into one that can also be efficiently decoded with only a log-log factor increase in the number of tests.

\end{abstract}

\section{Introduction}
 The objective of group testing is to efficiently identify a small set of $d$ defective items in a large population of size $N$ by performing binary tests on groups of items, as opposed to testing the items individually. A positive test outcome indicates that the group contains at least one defective item. A negative test outcome indicates that all the items in the group are non-defective. When $d$ is much smaller than $N$, the defectives can be identified with far fewer than $N$ tests.

The original group testing framework was developed in 1943 by Robert Dorfman \cite{dorfman1943detection}. Back then, group testing was devised to identify which WWII draftees were infected with syphilis, without having to test them individually. In Dorfman's application, items represented draftees and tests represented actual blood tests. Over the years, group testing has found numerous applications in fields spanning biology \cite{chen2008decoding}, medicine \cite{ganesan2017learning}, machine learning \cite{malioutov2013exact}, data analysis \cite{gilbert2008group}, signal processing \cite{emad2014poisson}, and wireless multiple-access communications \cite{Berger1984,wolf1985born,luo2008neighbor,Fletcher2009}.

\subsection{Non-adaptive probabilistic group testing}
Group testing strategies can be \emph{adaptive}, where the $i^{th}$ test is a function of the outcomes of the $i-1$ previous tests, or \emph{non-adaptive}, where all tests are designed in one shot. A non-adaptive group testing strategy can be represented by a $t \times N$ binary matrix ${M}$, where $M_{ij} = 1$ indicates that item $j$ participates in test $i$. Group testing schemes can also be \emph{combinatorial} \cite{ngo2000survey,du2000combinatorial} or \emph{probabilistic} \cite{atia2009, sejdinovic2010, chan2001probabilisticgt,  cevher16, johnson14, scarlett18, scarlett_cevher_17, johnson_noisy_18}. 

The goal of combinatorial group testing schemes is to recover any set of up to $d$ defective items with zero-error and require at least $t = \min \{N, \Omega(d^2 \log_d N) \}$ tests \cite{d1982bounds, furedi1996onr}. A strongly explicit construction\footnote{We will call a $t \times N$ matrix \em strongly explicit \em  if any column of
the matrix can be constructed in time $\textnormal{poly}(t)$. A matrix will be called \em explicit \em if it
can be constructed in time $\textnormal{poly}(t, N)$.} that achieves $t = O(d^2 \log_d^2 N)$ was introduced by Kautz and Singleton in \cite{kautz1964}. A more recent explicit construction achieving $t = O(d^2 \log N)$ was introduced by Porat and Rothschild in \cite{porat2008}. We note that the Kautz-Singleton construction 
matches the best known lower bound $\Omega(d^2 \log_d N)$ in the regime where $d = \Theta(N^{\alpha})$ for some $\alpha \in (0, 1)$. However, for moderate values of $d$ (e.g., $d = O(\textnormal{poly}(\log N))$), the construction introduced by Porat and Rothschild achieving $t = O(d^2 \log N)$ is more efficient and the Kautz-Singleton construction is suboptimal in this regime. 

In contrast, probabilistic group testing schemes assume a random defective set of size $d$, allow for an arbitrarily small probability of reconstruction error, and require only $t =\Theta( d \log N)$ tests  when $d = O(N^{1-\alpha})$ for some $\alpha \in (0, 1)$ \cite{chan2001probabilisticgt, cevher16, johnson14}. In this paper, we are interested in non-adaptive probabilistic group testing schemes.

\subsection{Our contributions }
To best of our knowledge, all known probabilistic group testing strategies that achieve $t = O(d \log N)$ tests are randomized (i.e., $M$ is randomly constructed) \cite{atia2009, sejdinovic2010, chan2001probabilisticgt,  cevher16, johnson14, scarlett18, scarlett_cevher_17, johnson_noisy_18}. Recently, Mazumdar \cite{mazumdar16} presented explicit schemes (deterministic constructions of $M$) for probabilistic group testing framework. This was done by studying the average and minimum Hamming distances of constant-weight codes (such as Algebraic-Geometric codes) and relating them to the properties of group testing strategies. However, the explicit schemes in \cite{mazumdar16} achieve $t = \Theta(d \log^2 N/\log d)$, which is not order-optimal when $d$ is poly-logarithmic in $N$. It is therefore of interest to find explicit, deterministic schemes that achieve $t = O(d\log N)$ tests.

This paper presents a strongly explicit scheme that achieves $t = O(d\log N)$ in the regime where $d = \Omega(\log^2 N)$. We show that Kautz and Singleton's well-known scheme is order-optimal for probabilistic group testing. This is perhaps surprising because for moderate values of $d$ (e.g., $d = O(\textnormal{poly}(\log N))$), this scheme is known to be sub-optimal for combinatorial group testing. We prove this result for both the noiseless and noisy (where test outcomes can be flipped at random) settings of probabilistic group testing framework. We prove the order-optimality of Kautz and Singleton's construction by analyzing the probability of a non-defective item being ``covered'' (c.f. Section \ref{sysmod}) by a random defective set directly, rather than arguing from combinatorial properties of the underlying code, which has been the main approach in the literature  \cite{kautz1964, porat2008, mazumdar16}. 

We say a group testing scheme, which consists of a group testing strategy (i.e., $M$) and a decoding rule, achieves probability of error $\epsilon$ and is \em efficiently decodable \em if the decoding rule can identify the defective set in $\textnormal{poly}(t)$-time complexity with $\epsilon$ probability of error. While we can achieve the decoding complexity of $O(t N)$ with the ``cover decoder'' (c.f. Section \ref{sysmod})\footnote{Common constructions in group testing literature have density $\Theta(1/d)$, therefore, the decoding complexity can be brought to $O(tN/d)$.}, our goal is to bring the decoding complexity to $\textnormal{poly}(t)$. To this end, we use a recursive technique inspired by \cite{ngo11} to convert the Kautz-Singleton construction into a strongly explicit construction with $t = O(d \log N \log \log_d N)$ tests and decoding complexity $O(d^3 \log N  \log \log_d N)$. This provides an efficiently decodable scheme with only a log-log factor increase in the number of tests. Searching for order-optimal explicit or randomized constructions that are efficiently decodable remains an open problem.


\subsection{Outline }
The remainder of this paper is organized as follows. In Section \ref{sysmod}, we present the system model and necessary prerequisites. The optimality of the Kautz-Singleton construction in the probabilistic group testing setting is formally presented in Section \ref{sec:mainres}. We propose an efficiently decodable group testing strategy in Section \ref{sec:decoding}. We defer the proofs of the results to their corresponding sections in the appendix. We provide, in Section \ref{sec:related}, a brief survey of related results on  group testing and a detailed comparison with Mazumdar's recent work in \cite{mazumdar16}. Finally, we conclude our paper in Section \ref{sec:conclusion} with a few interesting open problems.

\section{System Model and Basic Definitions}
\label{sysmod}
For any $t \times N$ matrix ${M}$, we use ${M}_i$ to refer to its $i$'th column and ${M}_{ij}$ to refer to its $(i, j)$'th entry. The \textit{support} of a column $M_i$ is the set of coordinates where $M_i$ has nonzero entries. For an integer $m \geq 1$, we denote the set $\{1, \ldots, m\}$ by $[m]$. The Hamming weight of a column of $M$ will be simply referred to as the \textit{weight} of the column.

We consider a model where there is a random defective set $S$ of size $d$ among the items
$[N]$. We define $\mathcal{S}$ as the set of all possible defective sets, i.e., the set of $\binom{N}{d}$ subsets of $[N]$ of cardinality $d$ and we let $S$ be uniformly distributed over $\mathcal{S}$.\footnote{This assumption is not critical. Our results carry over to the setting where the defective items are sampled with replacement.} The goal is to determine $S$ from the binary measurement vector $Y$ of size $t$ taking the form
\begin{align}
Y = \left(\bigvee_{i \in S} M_i \right)\oplus v,
\label{model}
\end{align}
where $t \times N$ measurement matrix $M$ indicates which items are included in the test, i.e., ${M}_{ij} = 1$ if the item $j$ is participated in test $i$, $v \in \{0,1\}^t$ is a noise term, and $\oplus$ denotes modulo-2 addition. In words, the measurement vector $Y$ is the Boolean OR combination of the columns of the measurement matrix $M$ corresponding to the defective items in a possible noisy fashion. We are interested in both the noiseless and noisy variants of the model in \eqref{model}. In the noiseless case, we simply consider $v = 0$, i.e., $Y = \bigvee_{i \in S} M_i $. Note that the randomness in the measurement vector $Y$ is only due to the random defective set in this case. On the other hand, in the noisy case we consider $v \sim \textnormal{Bernoulli}(p)$ for some fixed constant $p \in (0, 0.5)$, i.e., each measurement is independently flipped with probability~$p$.

Given $M$ and $Y$, a decoding procedure forms an estimate $\hat{S}$ of $S$. The performance measure we consider in this paper is the \textit{exact recovery} where the average probability of error is given by
\begin{align*}
P_e \triangleq \Pr (\hat{S} \neq S),
\end{align*}
and is taken over the realizations of $S$ and $v$ (in the noisy case). The goal is to minimize the total number of tests $t$ while achieving a vanishing probability of error, i.e., satisfying $P_e \rightarrow 0$.

\subsection{Disjunctiveness}

We say that a column $M_i$ is \emph{covered} by a set of columns $M_{j_1}, \ldots, M_{j_l}$ with ${j_1}, \ldots, {j_l} \in [N]$ if the support of $M_i$ is contained in the union of the supports of columns  $M_{j_1}, \ldots, M_{j_l}$. A binary matrix $M$ is called $d$-\textit{disjunct} if any column of $M$ is not covered by any other $d$ columns. The $d$-disjunctiveness property ensures that we can recover any defective set of size $d$ with zero error from the measurement vector $Y$ in the noiseless case. This can be naively done using the \textit{cover decoder} (also referred as the COMP decoder \cite{chan2001probabilisticgt, johnson14}) which runs in $O(t N)$-time. The cover decoder simply scans through the columns of $M$, and returns the ones that are covered by the measurement vector $Y$.
When $M$ is $d$-disjunct, the cover decoder succeeds at identifying all the defective items without any error.

In this work, we are interested in the probabilistic group testing problem where the 0-error condition is relaxed into a vanishing probability of error. Therefore we can relax the $d$-disjunctiveness property. Note that to achieve an arbitrary but fixed $\epsilon$ average probability of error in the noiseless case, it is sufficient to ensure that at least $(1-\epsilon)$ fraction of all possible defective sets do not cover any other column. A binary matrix satisfying this relaxed form is called an \textit{almost disjunct} matrix \cite{mazumdar16, macula04, malyutov78, Zhigljavsky03} and with this condition one can achieve the desired $\epsilon$ average probability of error by applying the cover decoder.
\subsection{Kautz-Singleton Construction}
\label{sec:KS_cons}

In their work \cite{kautz1964}, Kautz and Singleton provide a construction of disjunct matrices by converting a Reed-Solomon (RS) code \cite{RS} to a binary matrix. We begin with the definition of Reed-Solomon codes.
\begin{definition}
Let $\mathbb{F}_q$ be a finite field and $\alpha_1, \ldots, \alpha_n$ be distinct elements from $\mathbb{F}_q$. Let $k \leq n \leq q$. The Reed-Solomon code of dimension $k$ over $\mathbb{F}_q$, with evaluation points $\alpha_1, \ldots, \alpha_n$ is defined with the following encoding function. The encoding of a message $m = (m_0, \ldots, m_{k-1})$ is the evaluation of the corresponding $k-1$ degree polynomial $f_m(X) = \sum_{i=0}^{k-1} m_i X^i$ at all the
$\alpha_i$'s:
\begin{align*}
\textnormal{RS}(m) = (f_m(\alpha_1), \ldots, f_m(\alpha_n)).
\end{align*}
\label{RS}
\end{definition}

The Kautz-Singleton construction starts with a $[n, k]_q$ RS code with $n = q$ and $N = q^k$. Each $q$-ary symbol is then replaced by a unit weight binary vector of length $q$, via ``identity mapping" which takes a symbol $i \in [q]$ and maps it to the vector in $\{0, 1\}^q$ that has a 1 in the $i$'th position and zero everywhere else. Note that the resulting binary matrix will have $t = n q = q^2$ tests. An example illustrating the Kautz-Singleton construction is depicted in Figure \ref{fig:ks}. This construction achieves a $d$-disjunct $t \times N$ binary matrix with $t = O(d^2 \log_d^2 N)$ by choosing the parameter $q$ appropriately. While the choice $n=q$ is appropriate for the combinatorial group testing problem, we will shortly see that we need to set $n = \Theta(\log N)$ in order to achieve the order-optimal result in the probabilistic group testing problem.

\begin{figure}
  \centering
    \includegraphics[width=0.7\textwidth]{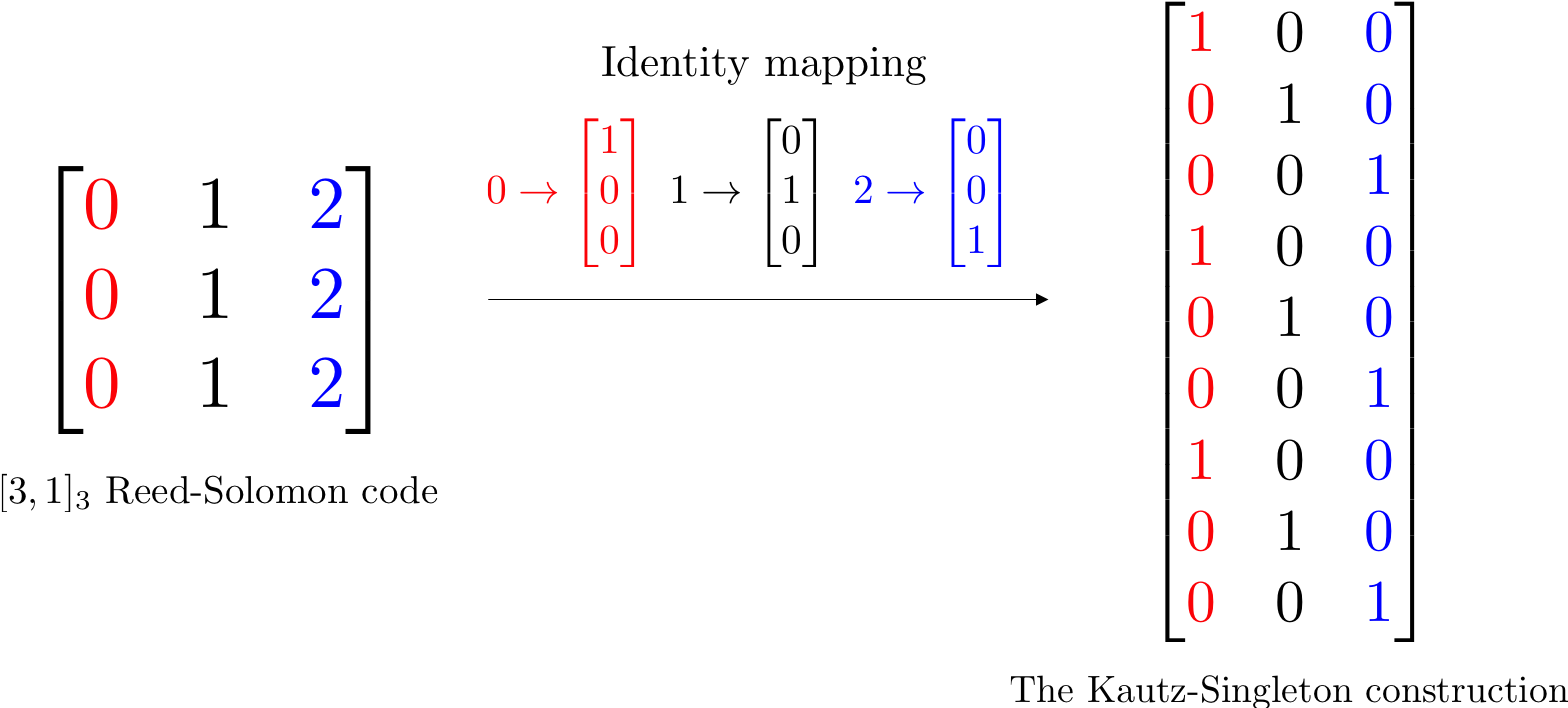}
      \caption{An example of the Kautz-Singleton construction with $[3, 1]_3$ Reed-Solomon code.}
      \label{fig:ks}
\end{figure}

While this is a strongly explicit construction, it is suboptimal for combinatorial group testing in the regime $d = O(\textnormal{poly}(\log N))$: an explicit construction with smaller $t$ (achieving $t = O(d^2 \log N)$) is introduced by Porat and Rothschild in \cite{porat2008}. 
Interestingly, we will show in the next section that
this same strongly explicit construction that is suboptimal for combintorial group testing in fact achieves the order-optimal $t = \Theta(d \log N)$ result in both the noiseless and noisy versions of probabilistic group testing.

\section{Optimality of the Kautz-Singleton construction}
\label{sec:mainres}

We begin with the noiseless case ($v=0$ in \eqref{model}). The next theorem shows the optimality of the Kautz-Singleton construction with properly chosen parameters $n$ and $q$. 
\begin{theorem}
Let $\delta > 0$. Under the noiseless model introduced in Section \ref{sysmod}, the Kautz-Singleton construction with parameters $q = c_1 d$ for any $c_1 \geq 4$ and $n = (1 + \delta) \log N$ has average probability of error $P_e \leq N^{- \Omega(\log q)} + N^{-\delta}$ under the cover decoder in the regime $d = \Omega(\log^2 N)$.
\label{thm:Exp_noiseless}
\end{theorem}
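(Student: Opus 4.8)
The plan is to analyze the cover decoder directly by bounding, for each fixed non-defective item $j$, the probability (over the random defective set $S$) that column $M_j$ is covered by $\{M_i : i \in S\}$, and then to take a union bound over all $N - d$ non-defective items. In the Kautz-Singleton construction each column $M_j$ corresponds to a codeword $c^{(j)} = \mathrm{RS}(m^{(j)})$ of a $[n,k]_q$ Reed-Solomon code with $n = (1+\delta)\log N$ and $q = c_1 d$, and after the identity mapping the support of $M_j$ is exactly the set of $n$ coordinates $\{(r, c^{(j)}_r) : r \in [n]\}$ inside the $nq = t$ rows (thought of as $n$ blocks of size $q$). Thus $M_j$ is covered by the defective columns if and only if \emph{for every block} $r \in [n]$ there exists some defective $i \in S$ with $c^{(i)}_r = c^{(j)}_r$; equivalently, in each coordinate $r$ the symbol $c^{(j)}_r$ appears among the $d$ symbols $\{c^{(i)}_r : i \in S\}$.

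The key step is to estimate, for a \emph{fixed} coordinate $r$, the probability that $c^{(j)}_r$ is ``hit'' by the defective set, and then to handle the correlations across the $n$ coordinates. Here is the heuristic: since $k = \log N / \log q$ is small, for a uniformly random message the symbol in a given coordinate is roughly uniform over $\mathbb{F}_q$, so a single random defective codeword agrees with $c^{(j)}$ in coordinate $r$ with probability about $1/q$, and hence the probability that \emph{none} of the $d$ defectives hits coordinate $r$ is about $(1 - 1/q)^d \geq (1 - 1/(c_1 d))^d \approx e^{-1/c_1}$, a constant bounded away from $1$. Naively, if the $n$ coordinates were independent we would get that $M_j$ is covered with probability at most $(1 - e^{-1/c_1})^{\,n} \le (1-\beta)^{(1+\delta)\log N} = N^{-\Omega(1)}$ for some constant $\beta > 0$ — but the constant in the exponent will only be $\Omega(1)$, not $\Omega(\log q)$, so to get the claimed $N^{-\Omega(\log q)}$ bound one must be more careful. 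The right way is to observe that if $M_j$ is covered, then the multiset of defective codewords must, block by block, contain $c^{(j)}$; one should condition on the defective codewords and use the Reed-Solomon distance property: any $d$ codewords can jointly ``cover'' $c^{(j)}$ on at most some bounded number of coordinates unless... — more precisely, I would partition the $n$ coordinates according to which defective codeword is responsible for the hit, and use that two distinct RS codewords agree in at most $k-1$ coordinates, so a single defective can account for at most $k-1$ of the hits; covering all $n$ coordinates with $d$ defectives each contributing $\le k-1$ agreements requires $d(k-1) \ge n$, and a careful counting of the number of ways this can happen together with the $q^{-(\text{agreements})}$ probability factors yields the $N^{-\Omega(\log q)}$ bound. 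The second term $N^{-\delta}$ will come from a separate bad event (e.g. that the chosen messages are not all distinct, or a lower-order contribution), controlled by the slack $\delta$ in $n = (1+\delta)\log N$.

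More concretely, I would proceed as follows. First, fix the non-defective column $j$ and expose the defective set $S$; write $A_r$ for the event that coordinate $r \in [n]$ is hit. Second, use a union bound over the possible ``assignments'' $\phi : [n] \to S$ recording, for each coordinate, a defective responsible for hitting it: $\Pr[M_j \text{ covered}] \le \sum_{\phi} \Pr[\forall r,\ c^{(\phi(r))}_r = c^{(j)}_r]$. Third, for a fixed $\phi$ and a fixed defective $i \in S$, the coordinates in $\phi^{-1}(i)$ on which $c^{(i)}$ must agree with $c^{(j)}$ impose that many linear constraints on the random message $m^{(i)}$; since $|\phi^{-1}(i)|$ of these constraints, as long as $|\phi^{-1}(i)| \le k$, are linearly independent (Vandermonde), the probability is exactly $q^{-|\phi^{-1}(i)|}$, and when $|\phi^{-1}(i)| > k$ it is $0$ unless $c^{(i)} = c^{(j)}$ which is excluded. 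Fourth, multiply across $i \in S$ using independence of the messages: each $\phi$ contributes $q^{-n}$ (when feasible), and the number of feasible $\phi$'s is at most $\binom{n}{\le k, \le k, \ldots} \le$ (a multinomial-type bound) $\le d^{\,n}$ but constrained by the "$\le k$ per defective" rule — this combinatorial count, balanced against the $q^{-n}$ factor and using $q = c_1 d$ with $c_1 \ge 4$, is what produces a net bound of $q^{-\Omega(n)} = N^{-\Omega(\log q)}$. Finally, sum over all $N$ non-defective items; the extra $N$ factor is absorbed since $n = (1+\delta)\log N$ gives a polynomial saving with exponent growing in $\log q$, and the residual $N^{-\delta}$ term accounts for the union-bound overhead / distinctness event.

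The main obstacle I anticipate is precisely the combinatorial counting in the fourth step: one must show that the number of feasible assignments $\phi$ (respecting the constraint that no defective is used on more than $k$ coordinates, since more than $k$ forced agreements would force equality of codewords) is small enough that, when multiplied by $q^{-n}$, it still beats the outer union bound over $N$ items with room to spare that scales like $\log q$ in the exponent. Getting the constant in $d = \Omega(\log^2 N)$ and the condition $c_1 \ge 4$ to line up correctly — in particular ensuring $k = \log N / \log q$ is small enough relative to $n/d$ so that the "$\le k$ per defective" constraint is actually binding and forces $\phi$ to spread out — is the delicate quantitative heart of the argument. Everything else (the Vandermonde linear-independence computation, the independence of messages across columns, the outer union bound) is routine.
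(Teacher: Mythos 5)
Your approach is genuinely different from the paper's, and its core idea is sound. The paper reduces the event ``column $j$ is covered'' to the event that the product polynomial $h(X)=\prod_{j\in S'}f_{m_j}(X)$ of $d$ randomly drawn nonzero polynomials vanishes on all of $\Psi$; it then (i) shows $|R(h)|\le 2d$ with probability $1-N^{-\Omega(\log q)}$ via a second-moment bound on root counts and Bernstein's inequality for sampling without replacement (this is exactly where $d=\Omega(\log^2 N)$ enters), and (ii) shows by a symmetry argument that, conditioned on its size $l$, the root set $R(h)$ is uniform over $l$-subsets of $\mathbb{F}_q$, so that $\Pr\{R(h)\supseteq\Psi\mid |R(h)|=l\}=\binom{q-n}{l-n}/\binom{q}{l}\le 2^{-n}$ for $l\le 2d$ and $q=4d$. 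You instead union-bound over assignments $\phi:[n]\to S$ of a ``responsible'' defective to each coordinate and price each assignment at $q^{-n}$ via the Vandermonde rank argument. Both routes are valid; yours avoids the root-counting lemma, the concentration step, and (as far as I can tell) the hypothesis $d=\Omega(\log^2 N)$ altogether.

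However, you have misplaced the difficulty. The constraint that no defective can be responsible for more than $k-1$ coordinates is vacuous in this parameter regime: with $n=(1+\delta)\log N$ and $d=\Omega(\log^2 N)$ we have $n/d\ll 1\le k$, so the assignments are never forced to ``spread out'' and the feasible count is essentially the full $d^n$. Consequently your union bound cannot produce $N^{-\Omega(\log q)}$ per item --- but it does not need to. The unconstrained count already gives $\Pr[\text{column } j \text{ covered}]\le d^n\cdot q^{-n}=c_1^{-n}\le 4^{-n}=N^{-2(1+\delta)}$, and summing over the $N$ non-defective items yields $P_e\le N^{-1-2\delta}\le N^{-\delta}$, which implies the stated bound; the $N^{-\Omega(\log q)}$ term in the theorem is an artifact of the paper's concentration step, not something your argument must reproduce. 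The one step you genuinely need to repair is ``multiply across $i\in S$ using independence of the messages'': the defective codewords are drawn without replacement from $[N]\setminus\{j\}$, so the factors are not independent, and the naive product over all $d$ of them picks up a correction of order $(1-d/N)^{-d}$, which is ruinous for large $d$. The fix is to multiply only over the at most $n$ defectives with $\phi^{-1}(i)\neq\emptyset$, bounding each conditional probability by $q^{-|\phi^{-1}(i)|}\cdot N/(N-d)$; this costs a factor of at most $2^{n}$ when $d\le N/2$ and leaves $(2/c_1)^n\le 2^{-n}$, which still closes the proof.
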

The proof of the above theorem can be found in Appendix \ref{appendix:noiseless}. We note that the Kautz-Singleton construction in Theorem \ref{thm:Exp_noiseless} has $t = n q = \Theta(d \log N)$ tests, therefore, achieving the order-optimal result in the probabilistic group testing problem in the noiseless case. It is further possible to extend this result to the noisy setting where we consider $v \sim \textnormal{Bernoulli}(p)$ for some fixed constant $p \in (0, 0.5)$, i.e., each measurement is independently flipped with probability $p$. Our next theorem shows the optimality of the Kautz-Singleton construction in this case.
\begin{theorem}
Let $\delta > 0$. Under the noisy model introduced in Section \ref{sysmod} with some fixed noise parameter $p \in (0, 0.5)$, the Kautz-Singleton construction with parameters $q = c_1 d$ for any $c_1 \geq 24$ and $n = c_2(1 + \delta) \log N$ for any $c_2 \geq \max \{\frac{8}{9(0.5-p)^2}, 40.57\}$ has average probability of error $P_e \leq N^{- \Omega(\log q)} + 3N^{-\delta}$ under the modified version of cover decoder in the regime $d = \Omega(\log^2 N)$.
\label{thm:Exp_noisy}
\end{theorem}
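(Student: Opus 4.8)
The plan is to reduce the noisy statement to a quantitatively strengthened form of the covering analysis behind Theorem~\ref{thm:Exp_noiseless}, combined with a concentration argument for the $\mathrm{Bernoulli}(p)$ noise. Recall the block structure of the Kautz--Singleton matrix: item $j\in[N]$ corresponds to a Reed--Solomon codeword $c^{(j)}\in\mathbb{F}_q^n$, and the $t=nq$ rows are partitioned into $n$ blocks of $q$ rows, where in block $\ell$ the column $M_j$ has its single $1$ in row $c^{(j)}_\ell$. For a non-defective item $j$ and defective set $S$, call block $\ell$ \emph{missed} if $c^{(j)}_\ell\neq c^{(i)}_\ell$ for every $i\in S$, and write $g_j$ for the number of missed blocks; the noiseless cover decoder errs on $j$ precisely when $g_j=0$. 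I would decode with the natural noisy variant of the cover decoder: let the \emph{off-count} $Z_j$ be the number of coordinates in the support of $M_j$ at which $Y$ equals $0$, and declare $j$ defective iff $Z_j\le\tau$, where $\tau=n\bigl(p+\tfrac12(1-2p)\gamma\bigr)$ for a constant $\gamma=\gamma(c_1)>0$ fixed below (this uses the fixed, known value of $p$).

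First I would record the law of $Z_j$ under~\eqref{model}. For a defective item $i$, in every block the $1$ of $M_i$ lies on a coordinate where the noiseless OR equals $1$, so $Z_i\sim\mathrm{Binomial}(n,p)$. For a non-defective $j$, the $g_j$ missed blocks each contribute an independent $\mathrm{Bernoulli}(1-p)$ and the remaining $n-g_j$ blocks an independent $\mathrm{Bernoulli}(p)$, so $\mathbb{E}[Z_j]=np+g_j(1-2p)$; in particular $\mathbb{E}[Z_j]\ge\tau+\tfrac12(1-2p)\gamma n$ whenever $g_j\ge\gamma n$, using $p<\tfrac12$. Since $Z_j$ is a sum of independent $\{0,1\}$ variables, Hoeffding's inequality gives both $\Pr[Z_j\le\tau\mid g_j\ge\gamma n]\le\exp\!\bigl(-\tfrac12(1-2p)^2\gamma^2 n\bigr)$ and $\Pr[Z_i>\tau]\le\exp\!\bigl(-\tfrac12(1-2p)^2\gamma^2 n\bigr)$ for defective $i$. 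With $n=c_2(1+\delta)\log N$, the hypothesis $c_2\ge\tfrac{8}{9(0.5-p)^2}$ (for an appropriate absolute choice of $\gamma$) drives both bounds below $N^{-(1+\delta)}$, so a union bound over the $\le d$ defective items and, separately, over the non-defective items with $g_j\ge\gamma n$ contributes at most $2N^{-\delta}$ to $P_e$.

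It remains to rule out non-defective items $j$ with $g_j<\gamma n$, i.e.\ items covered on more than a $(1-\gamma)$-fraction of their blocks; for these $Z_j$ is stochastically indistinguishable from a defective's, so no noise argument helps and we must appeal to the covering estimate. Here I would re-use the heart of the proof of Theorem~\ref{thm:Exp_noiseless}, which controls, over the random $S$, the probability that \emph{all} blocks in a prescribed subset $L\subseteq[n]$ are hit; schematically this probability is at most $\rho^{|L|}$ up to lower-order factors, for a constant $\rho=\rho(c_1)<1$ with $\rho\to0$ as $c_1\to\infty$. A union bound over the $\binom{n}{(1-\gamma)n}\le 2^{n}$ candidate sets of hit blocks then yields $\Pr_S[g_j<\gamma n]\le 2^{n}\rho^{(1-\gamma)n}$ up to such lower-order factors, for each fixed non-defective $j$; this decays exponentially in $n$, hence is $N^{-\Omega(\log q)}$, once $c_1\ge24$ makes $\rho$ small enough to absorb the $2^n$. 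A final union bound over the $<N$ non-defective items contributes $N^{-\Omega(\log q)}+N^{-\delta}$, inheriting the second term (and its source) from Theorem~\ref{thm:Exp_noiseless}. Adding the three pieces gives $P_e\le N^{-\Omega(\log q)}+3N^{-\delta}$.

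The main obstacle is the strengthened covering bound $\Pr_S[g_j<\gamma n]\le N^{-\Omega(\log q)}$: Theorem~\ref{thm:Exp_noiseless} only needs $g_j\ge1$, whereas now a constant fraction of the blocks must be missed with overwhelming probability, uniformly over all non-defective $j$, and this decay must simultaneously beat the $2^n$-size union bound over block subsets and the factor-$N$ union bound over items. Balancing the two regimes --- taking $\gamma$ small enough that the covering bound still closes yet large enough (a fixed constant, not $o(1)$) that the Hoeffding gap $\tfrac12(1-2p)\gamma n$ dominates the $\Theta(\sqrt n)$ fluctuations of $Z_j$ and the union bounds close --- is exactly what forces the explicit thresholds $c_1\ge24$ and $c_2\ge\max\{\tfrac{8}{9(0.5-p)^2},\,40.57\}$.
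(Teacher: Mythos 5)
Your proposal follows essentially the same route as the paper: the same threshold decoder on the number of negative tests within a column's support, the same split into false negatives (pure $\mathrm{Binomial}(n,p)$ plus Hoeffding) and false positives, the same two-regime split of non-defectives according to the fraction of covered blocks, and the same reduction of the "mostly covered" event to the Reed--Solomon root-counting machinery of Theorem~\ref{thm:Exp_noiseless} (conditioning on $h(X)=\prod_{j\in S'}f_{m_j}(X)$ having at most $2d$ roots, then a union bound over block subsets times a hypergeometric covering probability). Two quantitative points in your sketch do not survive as written, however. First, with your implicit $\gamma=3/4$ (which is what reproduces $c_2\ge\frac{8}{9(0.5-p)^2}$) and $\rho=2d/q=1/12$ at $c_1=24$, the bound $2^{n}\rho^{n/4}=(16/12)^{n/4}$ \emph{grows} with $n$; you need the sharper count $\binom{n}{n/4}\le(4e)^{n/4}$, which gives $(4e/12)^{n/4}=(e/3)^{n/4}$ and is precisely what produces the constant $4/\ln(3/e)\approx 40.57$ in the theorem. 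Second, your attribution of the covering-failure probability to the $N^{-\Omega(\log q)}$ term is off: that term decays like $N^{-\Theta(c_2)}$ (it is one of the three $N^{-\delta}$ contributions, and the source of the $40.57$ constraint), while the $N^{-\Omega(\log q)}$ term arises solely from the event that $h$ has more than $2d$ roots, controlled by the Bernstein bound for sampling without replacement — which is also where the hypothesis $d=\Omega(\log^2 N)$ enters. Neither issue changes the architecture of the argument, but both must be repaired to obtain the stated constants.
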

The proof of the above theorem can be found in Appendix \ref{appendix:noisy}. Similar to the noiseless setting, the Kautz-Singleton construction provides a strongly explicit construction achieving optimal number of tests $t = \Theta(d \log N)$ in the noisy case.

Given that the Kautz-Singleton construction achieves a vanishing probability of error with $t = \Theta(d \log N)$ order-optimal number of tests, a natural question of interest is how large the constant is and how the performance of this construction compares to random designs for given finite values of $d$ and $N$. To illustrate the empirical performance of the Kautz-Singleton construction in the noiseless case, we provide simulation results in Figure \ref{fig:sim1} and \ref{fig:sim2} for different choices of $N$ and $d$ and compare the results to random designs considered in the literature. We used the code in \cite{Erlich15_code} (see \cite{Erlich15} for the associated article) for the Kautz-Singleton construction. For comparison, we take two randomized constructions from the literature, namely
the Bernoulli design (see \cite{johnson14}) and the near-constant column weight design studied in \cite{scarlett18}. We use the cover decoder for decoding. 
The simulation results illustrate that the Kautz-Singleton construction achieves better success probability for the same number of tests, which suggests that the implied constant for the Kautz-Singleton construction may be better than those for these random designs; we note that similar empirical findings were observed in \cite{Erlich15}. Since the Kautz-Singleton construction additionally has an explicit and simple structure, this construction may be a good choice for designing measurement matrices for probabilistic group testing in practice.

\begin{figure}
  \centering
    \includegraphics[width=0.7\textwidth]{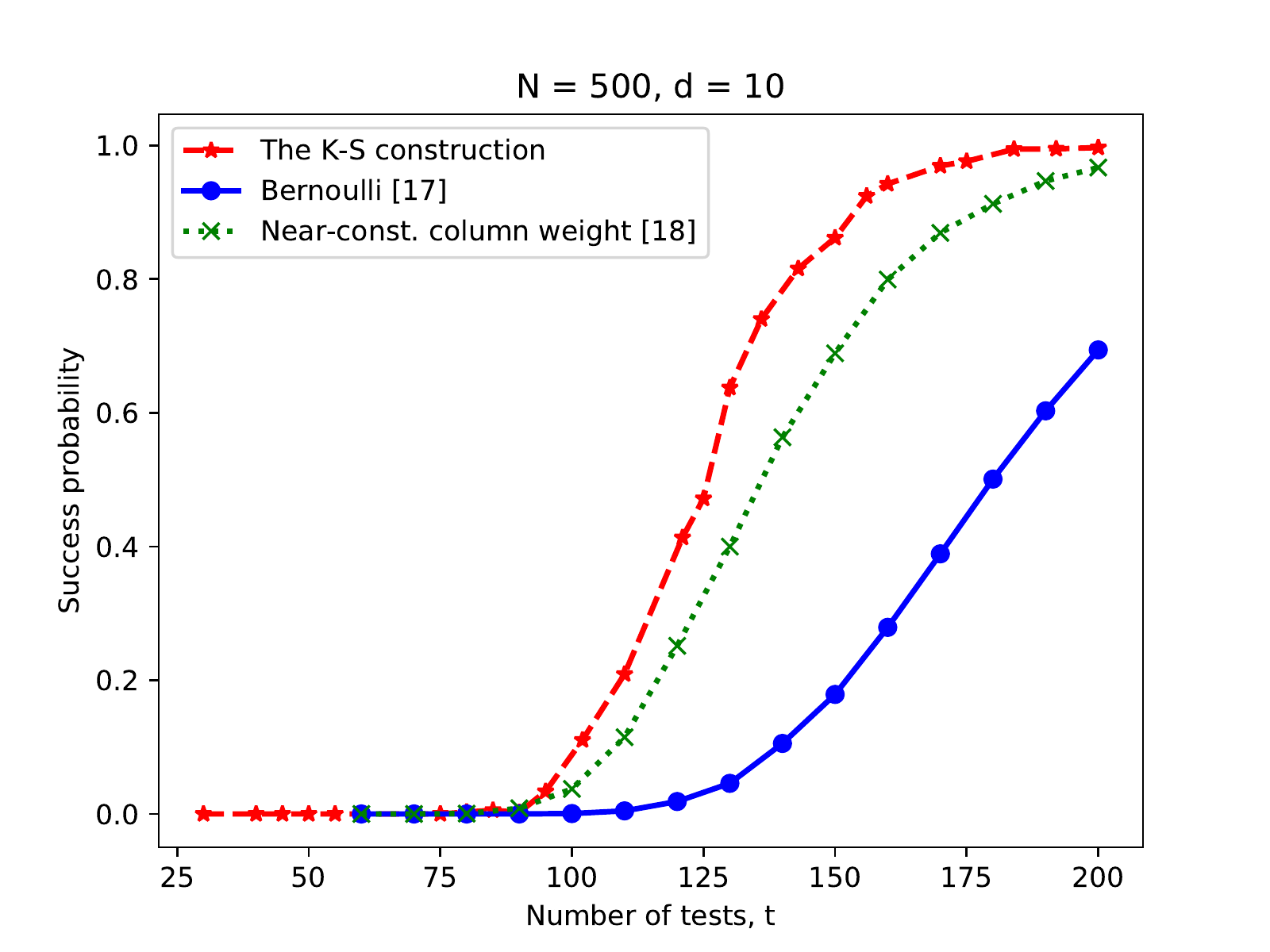}
      \caption{ Empirical performances of the Kautz-Singleton construction along with the random near-constant column weight \cite{scarlett18} and Bernoulli designs \cite{johnson14} under the cover decoder for $N = 500$ items and $d = 10$ defectives. For the Kautz-Singleton construction, empirical performance was judged using 5000 random trials and the number of tests correspond to a range of $(q, n)$ pair selections. For the random matrices, empirical performance was judged from 100 trials each on 100 random matrices.}
      \label{fig:sim1}
\end{figure}

\begin{figure}
  \centering
    \includegraphics[width=0.7\textwidth]{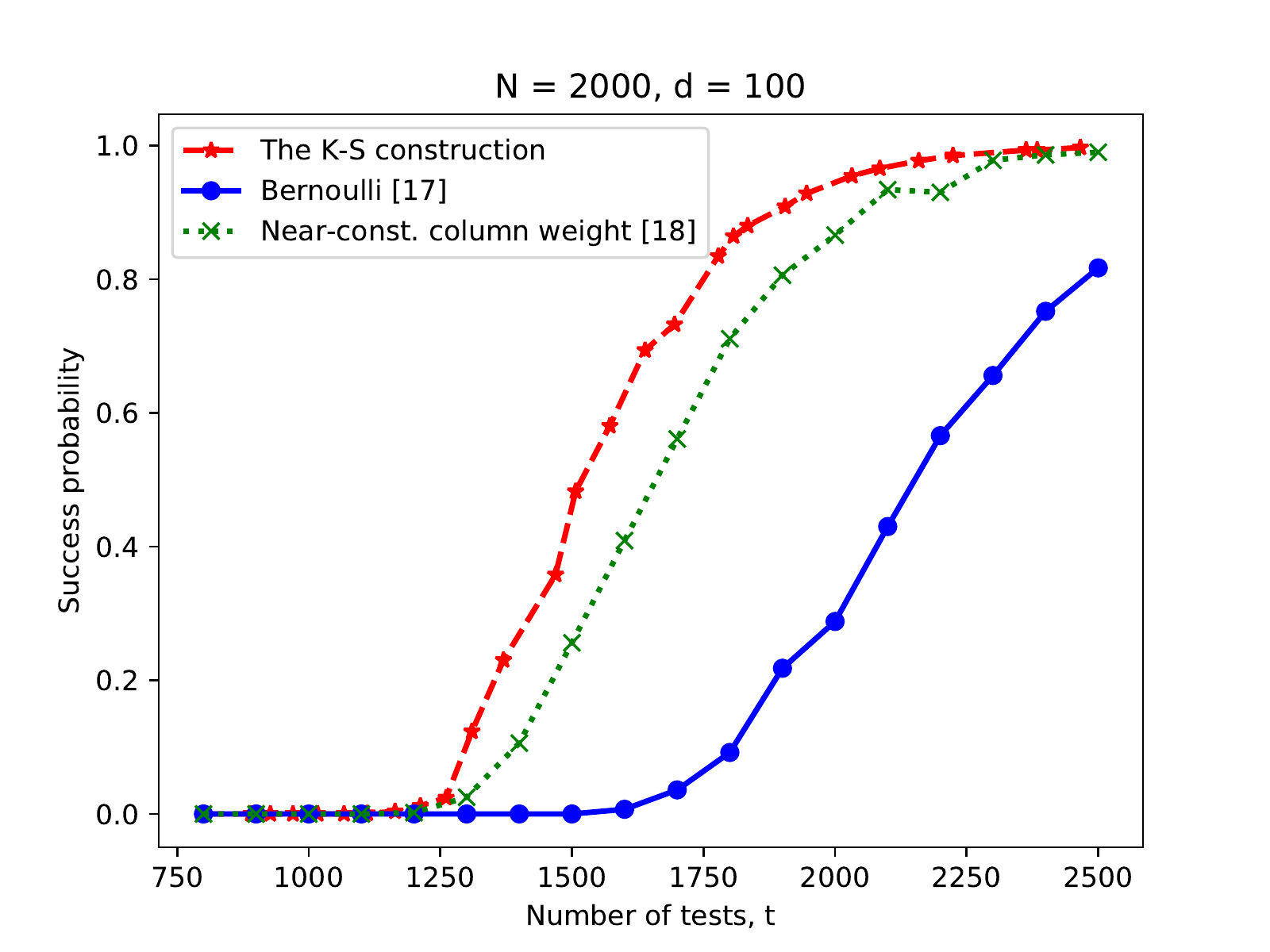}
      \caption{ Empirical performances of the Kautz-Singleton construction along with the random near-constant column weight \cite{scarlett18} and Bernoulli designs \cite{johnson14} under the cover decoder for $N = 2000$ items and $d = 100$ defectives. For the Kautz-Singleton construction, empirical performance was judged using 5000 random trials and the number of tests correspond to a range of $(q, n)$ pair selections. For the random matrices, empirical performance was judged from 100 trials each on 100 random matrices.}
      \label{fig:sim2}
\end{figure}

\section{Decoding}
\label{sec:decoding}

While the cover decoder, which has a decoding complexity of $O(t N)$, might be reasonable for certain applications, there is a recent research effort towards low-complexity decoding schemes due to the emerging applications involving massive datasets \cite{mahdi2009, indyk2010, ngo11, lee2016saffron, jaggi2017}. The target is a decoding complexity of $\textnormal{poly}(t)$. This is an exponential improvement in the running time over the cover decoder for moderate values of $d$. For the model we consider in this work (i.e., exact recovery of the defective set with vanishing probability of error), there is no known efficiently decodable scheme with optimal $t = \Theta(d \log N)$ tests to the best of our knowledge. The work \cite{lee2016saffron} presented a randomized scheme which identifies all the defective items with high probability with $O(d \log d \log N)$ tests and time complexity $O(d \log d \log N)$.
Another recent result, \cite{jaggi2017}, introduced an algorithm which requires $O(d \log d \log N)$ tests with $O(d(\log^2 d + \log N))$ decoding complexity. Note that the decoding complexity reduces to $O(d \log N)$ when $d = O(\textnormal{poly}(\log N))$ which is order-optimal (and sub-linear in the number of tests), although the number of tests is not. In both \cite{lee2016saffron} and \cite{jaggi2017}, the number of tests is away from the optimal number of tests by a factor of $\log d$. 

We can convert the strongly explicit constructions in Theorem \ref{thm:Exp_noiseless} and \ref{thm:Exp_noisy} into strongly explicit constructions that are also efficiently decodable by using a recursive technique introduced in \cite{ngo11} where the authors construct efficiently decodable error-tolerant list disjunct matrices. For the sake of completeness, we next discuss the main idea applied to our case.

The cover decoder goes through the columns of $M$ and decides whether the corresponding item is defective or not. This results in  decoding complexity $O(t N)$. Assume we were given a superset $S'$ such that $S'$ is guaranteed to include the defective set $S$, i.e. $S \subseteq S'$, then the cover decoder could run in time $O(t \cdot \vert S' \vert)$ over the columns corresponding to $S'$, which  depending on the size of $S'$ could result in significantly lower complexity. It turns out that we can construct this small set $S'$ recursively.

Suppose that we have access to an efficiently decodable $t_1(d, \sqrt{N}, \epsilon/4, p) \times \sqrt{N}$ matrix $M^{(1)}$ which can be used to detect at most $d$ defectives among $\sqrt{N}$ items with probability of error $P_e \leq {\epsilon/4}$ when the noise parameter is $p$ by using $t_1({d}, \sqrt{N}, {\epsilon/4}, {p})$ tests. We construct two $t_1(d, \sqrt{N}, \epsilon/4, p) \times N$ matrices $M^{(F)}$ and $M^{(L)}$ using $M^{(1)}$ as follows. For $i \in [N]$, the $i$'th column of $M^{(F)}$ is equal to  $j$'th column of $M^{(1)}$ if the \textit{first} $\frac{1}{2} \log N$ bits in the binary representation of $i$ are given by the binary representation of $j$ for $j \in [\sqrt{N}]$. Similarly, for $i \in [N]$, the $i$'th columns of $M^{(L)}$ is equal to the $j$'th column of $M^{(1)}$ if the \textit{last} $\frac{1}{2} \log N$ bits in the binary representation of $i$ are given by the binary representation of $j$ for $j \in [\sqrt{N}]$.

The final matrix matrix $M$ is constructed by vertically stacking $M^{(F)}$, $M^{(L)}$ and a $t_2(d, N, \epsilon/2, p) \times N$ matrix $M^{(2)}$ which is not necessarily efficiently decodable (e.g., the Kautz-Singleton construction). As before, $t_2(d, N, \epsilon/2, p)$ is the number of tests for $M^{(2)}$, which we assume can be used to detect  $d$ defectives among $N$ items with probability of error $P_e \leq {\epsilon/2}$  when the noise parameter is $p$. The decoding works as follows. We obtain the measurement vectors $Y^{(F)}$, $Y^{(L)}$, and $Y^{(2)}$ given by $Y^{(F)} = \bigvee_{i \in S} M^{(F)}_{i} \oplus v^{(F)}$, $Y^{(L)} = \bigvee_{i \in S} M^{(L)}_{i} \oplus v^{(L)}$, and $Y^{(2)} = \bigvee_{i \in S} M^{(2)}_{i} \oplus v^{(2)}$ respectively where $v^{(F)}$, $v^{(L)}$, and $v^{(2)}$ are the noise terms corrupting the corresponding measurements. We next apply the decoding algorithm for $M^{(1)}$ to $Y^{(F)}$ and $Y^{(L)}$ to obtain the estimate sets $\hat{S}^{(F)}$ and $\hat{S}^{(L)}$ respectively. Note that the sets $\hat{S}^{(F)}$ and $\hat{S}^{(L)}$ can decode the first and last $\frac{1}{2} \log N$-bits of the defective items respectively with probability at least $1 - \epsilon/2$ by the union bound. Therefore, we can construct the set $S' = \hat{S}^{(F)} \times \hat{S}^{(L)}$ where $\times$ denotes the Cartesian product and obtain a super set that contains the defective set $S$ with error probability at most $\epsilon/2$. We further note that since $|\hat{S}^{(F)}| \leq d$ and $|\hat{S}^{(L)}| \leq d$, we have $\vert S' \vert \leq d^2$.
We finally apply the naive cover decoder to $M^{(2)}$ by running it over the set $S'$ to compute the final estimate $\hat{S}$ which can be done in additional $O(t_2 \cdot d^2)$ time. Note that by the union bound the probability of error is bounded by $\epsilon$. Figure \ref{fig:dec} illustrates the main idea with the example of $d=2$ and $N=16$.
We provide this decoding scheme in Algorithm \ref{KSdec} for the special case $N = d^{2^{i}}$ for some non-negative integer $i$ although the results hold in the general case and no extra assumption beyond $d = \Omega(\log^2 N)$ is needed. The next theorem is the result of applying this idea recursively.

\begin{figure}
  \centering
    \includegraphics[width=1\textwidth]{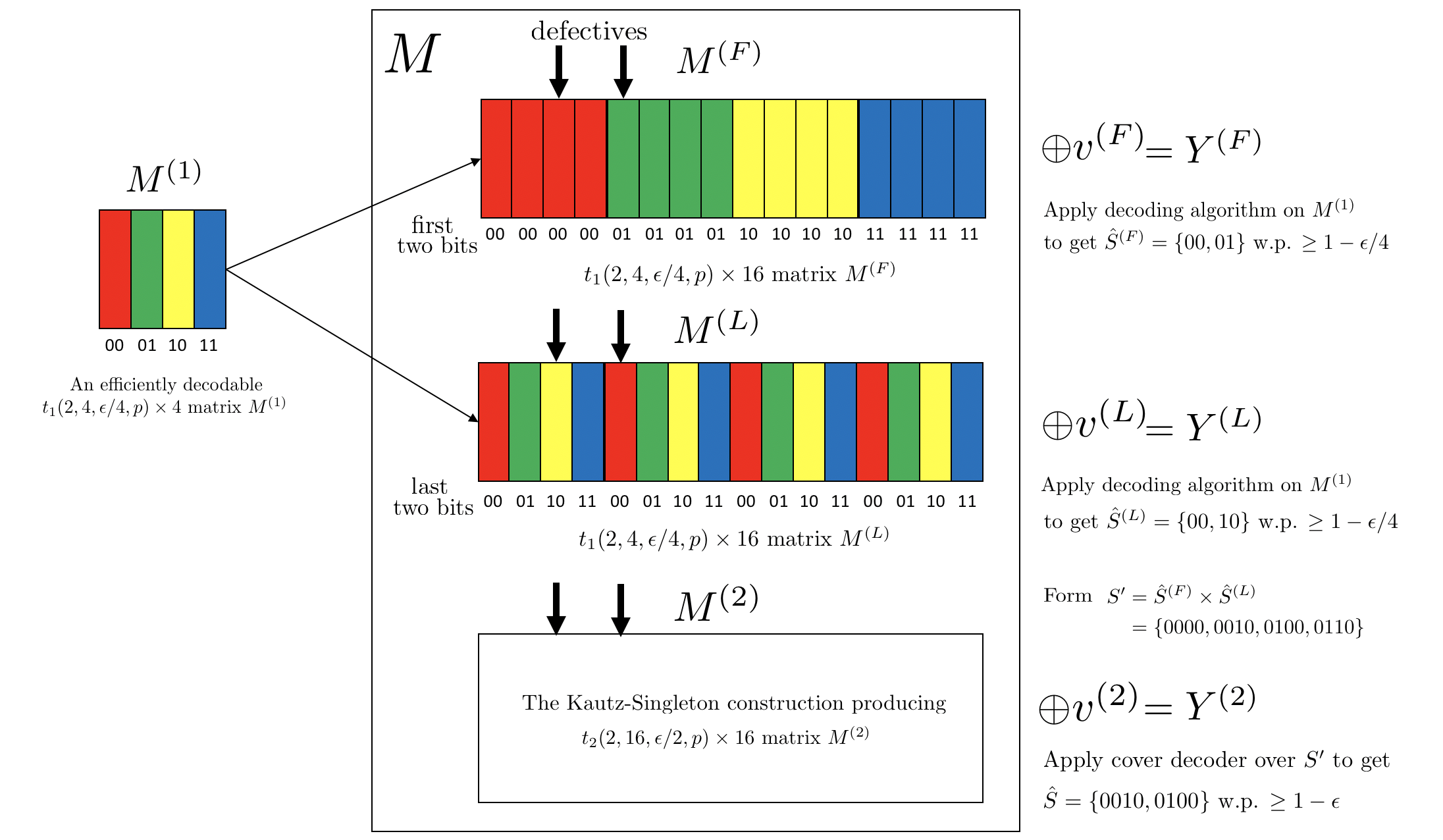}
      \caption{An illustration of the construction presented in Section \ref{sec:decoding} for the case $d=2$ and $N=16$. The illustration depicts the main idea, and the full construction is achieved by applying this idea recursively.}
      \label{fig:dec}
\end{figure}

\begin{algorithm}
\KwIn{The measurement vector $Y$, the group testing matrix $M$, the defective set size $d$, the number of items $N$}
\KwOut{The defective set estimate $\hat{S}$}
    \eIf{$N=d$}
      {
        Return the defective set using $Y$ (individual testing)\;
      }
      {
        Compute $M^{(1)}$ and $M^{(2)}$ (as described in the text)\;
        Compute $Y^{(F)}$ and $Y^{(L)}$  (as described in the text)\;
        $\hat{S}^{(F)} = decode(Y^{(F)}, M^{(1)}, d, \sqrt{N})$\;
        $\hat{S}^{(L)} = decode(Y^{(L)}, M^{(1)}, d, \sqrt{N})$\;
        \If{$\vert \hat{S}^{(F)} \vert > d$ or $\vert \hat{S}^{(L)} \vert > d$}{
        	return \{\}\;
        }
        Construct $S'= \hat{S}^{(F)} \times \hat{S}^{(L)}$\;
        Apply the cover decoder to $M^{(2)}$ over the set $S'$ and compute $\hat{S}$\;
        Return $\hat{S}$\;
      }
    \caption{The decoding alg. \textit{decode}($Y$, $M$, $d$, $N$)}
\label{KSdec}    
\end{algorithm}

\vspace{20pt}

\begin{theorem}
Under the noiseless/noisy model introduced in Section \ref{sysmod}, there exists a strongly explicit construction and a decoding rule achieving an arbitrary but fixed $\epsilon$ average probability of error with $t = O(d \log N \log \log_d N)$ number of 
tests that can be decoded in time $O(d^3 \log N \log \log_d N)$ in the regime $d = \Omega(\log^2 N)$.
\label{thm:eff_dec}
\end{theorem}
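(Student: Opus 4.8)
The plan is to turn the one-level construction and decoder described in the paragraphs just above into a recursion and to solve the resulting recurrences for the number of tests and the decoding time, all while tracking the error budget. Write $t(N,\epsilon)$ and $D(N,\epsilon)$ for the number of tests and the decoding time of the recursively-built scheme on $N$ items with target error $\epsilon$. Instantiating $M^{(2)}$ as the Kautz-Singleton matrix of Theorem~\ref{thm:Exp_noiseless} (resp.\ Theorem~\ref{thm:Exp_noisy}) with $q=c_1 d$ and $n=\Theta(\log N)$, and $M^{(F)},M^{(L)}$ as two copies of the recursively-built matrix on $\sqrt N$ items, the construction yields
\begin{align*}
t(N,\epsilon) &= 2\,t(\sqrt N,\epsilon/4) + t_2(N,\epsilon/2),\\
D(N,\epsilon) &= 2\,D(\sqrt N,\epsilon/4) + O\!\left(d^2\, t_2(N,\epsilon/2)\right),
\end{align*}
where $t_2(N',\epsilon')=O(d\log N')$ is the size of the Kautz-Singleton block (its constant depends on $\epsilon'$, controlled below), the $O(d^2 t_2)$ term is the cost of the cover decoder run over the candidate set $S'=\hat S^{(F)}\times\hat S^{(L)}$ of size at most $d^2$, and lower-order bookkeeping ($O(d^2)$ to form $S'$, $O(t)$ to split $Y$) has been absorbed into the stated terms. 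Handling general $N$ by rounding, the recursion stops once fewer than $\log d$ index bits remain, i.e.\ at $N'=d$ items (a base case handled directly by Theorem~\ref{thm:Exp_noiseless}/\ref{thm:Exp_noisy} or by individual testing), so its depth is $k=O(\log\log_d N)$.

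Unrolling is then routine. Since $t_2(N^{1/2^j},\cdot)=\Theta(2^{-j}d\log N)$, the $2^j$ nodes at depth $j$ contribute $2^j\cdot\Theta(2^{-j}d\log N)=\Theta(d\log N)$ tests and $2^j\cdot O(d^2\cdot 2^{-j}d\log N)=O(d^3\log N)$ decoding time; summing over $j=0,\dots,k-1$ and adding the (negligible) base-level totals gives $t=O(d\log N\log\log_d N)$ and $D=O(d^3\log N\log\log_d N)$, as claimed. One should also check that the hypothesis $d=\Omega(\log^2 N)$ propagates: at depth $j$ the scheme runs on $N^{1/2^j}$ items and requires $d=\Omega(\log^2(N^{1/2^j}))$, which follows from $\log^2(N^{1/2^j})\le\log^2 N$; the same bound gives $n=\Theta(2^{-j}\log N)\le c_1 d=q$, so every Reed-Solomon block is well defined. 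Strong explicitness survives: computing a single column of the top-level matrix unfolds into a binary recursion of depth $k$, i.e.\ $2^{O(k)}=\mathrm{poly}(\log N)$ Kautz-Singleton column computations, each in $\mathrm{poly}(t)$ time.

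The real content is the error bound, and this is the step I expect to be the main obstacle. A union bound at each node gives: failure of the depth-$j$ scheme $\le$ failure of its Kautz-Singleton block $+\;2\cdot$ failure of the depth-$(j+1)$ scheme; unrolling, the total error is at most $\sum_{j=0}^{k-1}2^j\cdot(\text{error of the depth-}j\text{ Kautz-Singleton block})$, so it suffices to drive that block's error below $\epsilon/(2\cdot 4^j)$. I would verify this from Theorem~\ref{thm:Exp_noiseless}: its bound $(N')^{-\delta}+(N')^{-\Omega(\log q)}$ on $N'=N^{1/2^j}$ items becomes $N^{-\delta/2^j}+N^{-\Omega(\log q)/2^j}$, and since $q=c_1 d=\Omega(\log^2 N)$ forces $\log q=\Omega(\log\log N)$, the second term is superpolynomially small at every level, while the first drops below $\epsilon/(2\cdot 4^j)$ uniformly in $j$ for a suitable \emph{constant} $\delta$ --- this is exactly where $d=\Omega(\log^2 N)$ is used, since at the deepest level $j=k$ one needs $d^{\delta}\gtrsim(\log_d N)^2/\epsilon$.

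Two further subtleties need care. First, the defective set seen by a recursive call is the projection of $S$ onto a subset of the index bits: a random set of size \emph{at most} $d$ that is not uniform over $d$-subsets. I would dispatch this by noting that fewer defectives only decrease the cover-decoder error, and by invoking the with-replacement versions of Theorems~\ref{thm:Exp_noiseless}--\ref{thm:Exp_noisy} (cf.\ the footnote in Section~\ref{sysmod}), under which the projection of an i.i.d.\ defective sample is again an i.i.d.\ sample on $\sqrt N$ items. Second, the cover decoder over $S'$ errs only if some column indexed by $S'\setminus S$ is covered by $\bigvee_{i\in S}M^{(2)}_i$; since $S'\setminus S$ consists only of non-defective items, this probability is at most that of \emph{any} non-defective column being covered by $S$ --- precisely the quantity Theorem~\ref{thm:Exp_noiseless} bounds --- so the event $\{S\subseteq S'\}$ need not even be conditioned on for this block and the union bound closes cleanly. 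For the noisy model the identical recursion is run with the modified cover decoder of Theorem~\ref{thm:Exp_noisy} applied to each of $Y^{(F)},Y^{(L)},Y^{(2)}$ with independent noise terms, which changes only the constants.
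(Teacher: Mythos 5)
Your proposal is correct and follows essentially the same route as the paper's proof: the identical recursion $t(N,\epsilon)=2t(\sqrt N,\epsilon/4)+t_2(N,\epsilon/2)$ (and its analogue for decoding time), unrolled to depth $O(\log\log_d N)$ with per-level cost $\Theta(d\log N)$ tests and $O(d^3\log N)$ decoding time, base case at $N'=d$, and a union bound over the error budget. Your extra care about the distribution of the projected defective set at recursive levels and the feasibility of the per-block error targets at the deepest level addresses points the paper passes over implicitly, but it does not change the argument.
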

The proof of the above theorem can be found in Appendix \ref{appendix:eff_dec}. We note that with only $\log \log_d N$ extra factor in the number of tests, the decoding complexity can be brought to the desired $O(\textnormal{poly}(t))$ complexity. We further note that the number of tests becomes order-optimal in the regime $d = \Theta(N^{\alpha})$ for some $\alpha \in (0, 1)$. In Table \ref{t1} we provide the results presented in this work along with the related results in the literature.

\begin{table*}
\begin{center}
\renewcommand{\arraystretch}{1.50}
\begin{tabular}{|c|c|c|c|}
\hline
Reference & Number of tests & Decoding complexity & Construction \\
\hline 
\cite{johnson14, chan14} & $t = \Theta(d \log N)$ & $O(t N)$ & Randomized \\
\hline
\cite{mazumdar16} & $t = O(d \log^2 N/\log d)$ & $ O(t N) $ & Strongly explicit \\
\hline
\cite{lee2016saffron} & $t =O(d \log d \log N)$ & $O(d \log d \log N)$ & Randomized \\
\hline
\cite{jaggi2017} & $t = O(d \log d \log N)$  & $O(d(\log^2 d + \log N))$ & Randomized \\
\hline
This work & $t = \Theta(d \log N)$ & $O(t N)$ & Strongly explicit \\
\hline
This work & $t = O(d \log N \log \log_d N)$ & $O(d^3 \log N \log \log_d N)$ & Strongly explicit 
\\
\hline
\end{tabular}
\end{center}
\caption{Comparison of non-adaptive probabilistic group testing results. We note that the main focus in \cite{johnson14, chan14} is the implied constant in $t = \Theta(d \log N)$.}
\label{t1}
\end{table*}


\section{Related Work}
\label{sec:related}

The literature on the non-adaptive group testing framework includes both explicit and random test designs. We refer the reader to \cite{du2000combinatorial} for a survey. In combinatorial group testing, the famous construction introduced by Kautz and Singleton \cite{kautz1964} achieves $t = O(d^2 \log_d^2 N)$ tests matching the best known lower bound $\min \{N, \Omega(d^2 \log_d N) \}$ \cite{d1982bounds, furedi1996onr} in the regime where $d = \Theta(N^{\alpha})$ for some $\alpha \in (0, 1)$. However, this strongly explicit construction is suboptimal in the regime where $d = O(\textnormal{poly}(\log N))$. An explicit construction achieving $t = O(d^2 \log N)$ was introduced by Porat and Rothschild in \cite{porat2008}. While $t = O(d^2 \log N)$ is the best known achievability result in combinatorial group testing framework, there is no strongly explicit construction matching it to the best of our knowledge. Regarding efficient decoding, recently Indyk, Ngo and Rudra \cite{indyk2010} introduced a randomized construction with $t = O(d^2 \log(N))$ tests that could be decoded in time $\textnormal{poly}(t)$. 
Furthermore, the construction in \cite{indyk2010} can be derandomized in the regime $d = O(\log N/\log \log N)$.  
Later, Ngo, Porat and Rudra \cite{ngo11} removed the constraint on $d$ and provided an explicit construction that can be decoded in time $\textnormal{poly}(t)$.
The main idea of \cite{indyk2010} was to consider \em list-disjunct \em matrices; a similar idea was considered by Cheraghchi in \cite{mahdi2009}, which obtained explicit constructions of non-adaptive group testing schemes that handle noisy tests and return a list of defectives that may include false positives.

There are various schemes relaxing the zero-error criteria in the group testing problem. For instance, the model mentioned above, where the decoder always outputs
a small super-set of the defective items, was studied in \cite{mahdi2009, rykov, bonis05, rashad90}. These constructions have efficient ($\textnormal{poly}(t)$-time) decoding algorithms, and so can be used alongside constructions without sublinear time decoding algorithms to speed up decoding.
Another framework where the goal is to recover at least a $(1 - \epsilon)$-fraction (for any arbitrarily small $\epsilon > 0$) of the defective set with high probability was studied in \cite{lee2016saffron} where the authors provided a scheme with order-optimal $O(d \log N)$ tests and the computational complexity. There are also different versions of the group testing problem in which a test can have more than two outcomes \cite{sobel71, hwang84} or can be threshold based \cite{CHEN20091581, mahdi2010, thach17}. More recently, sparse group testing frameworks for both combinatorial and probabilistic settings were studied in \cite{inan17allerton, gandikota2016sparse, inan18isit}.

When the defective set is assumed to be uniformly random, it is known that $t = \Theta(d \log N)$ is order-optimal for achieving the exact recovery of the defective set with vanishing probability of error (which is the model considered in this work) in the broad regime $d = O(N^{\alpha})$ for some $\alpha \in (0, 1)$ using random designs and information-theoretical tools \cite{ chan14, cevher16}. These results also include the noisy variants of the group testing problem. Efficient recovery algorithms with nearly optimal number of tests were introduced recently in \cite{lee2016saffron} and \cite{jaggi2017}. Regarding deterministic constructions of almost disjunct matrices, recently Mazumdar \cite{mazumdar16} introduced an analysis connecting the group testing properties with the \textit{average} Hamming distance between the columns of the measurement matrix and obtained (strongly) explicit constructions with $t = O(d \log^2 N/ \log d)$ tests. While this result is order-optimal in the regime where $d = \Theta(N^{\alpha})$ for some $\alpha \in (0, 1)$, it is suboptimal for moderate values of $d$ (e.g., $d = O(\textnormal{poly}(\log N))$). 
The performance of the Kautz-Singleton construction in the random model has been studied empirically~\cite{Erlich15}, but we are not aware of any theoretical analysis of it beyond what follows immediately from the distance of Reed-Solomon codes.
To the best of our knowledge there is no known explicit/strongly explicit construction achieving $t = \Theta(d \log N)$ tests in general for the noiseless/noisy version of the probabilistic group testing problem. 


\section{Conclusion}
\label{sec:conclusion}

In this work, we showed that the Kautz-Singleton construction is order-optimal in the noiseless and noisy variants of the probabilistic group testing problem. To the best of our knowledge, this is the first (strongly) explicit construction achieving order-optimal number of tests in the probabilistic group testing setting for poly-logarithmic (in $N$) values of $d$. We provided a novel analysis departing from the classical approaches in the literature that use combinatorial properties of the underlying code. We instead directly explored the probability of a non-defective item being covered by a random defective set using the properties of Reed-Solomon codes in our analysis. Furthermore, by using a recursive technique, we converted the Kautz-Singleton construction into a construction that is also efficiently decodable with only a log-log factor increase in number of tests which provides interesting tradeoffs compared to the existing results in the literature. 

There are a number of nontrivial extensions to our work. Firstly, it would be interesting to extend these results to the regime $d = o(\log^2 N)$. Another interesting line of work would be to find a deterministic/randomized construction achieving order-optimal $t = \Theta(d \log N)$ tests and is also efficiently decodable.

\appendix

\subsection{Proof of Theorem \ref{thm:Exp_noiseless}}
\label{appendix:noiseless}

Let $N$ be the number of items and $d$ be the size of the random defective set. We will employ the Kautz-Singleton construction which takes a $[n, k]_q$ RS code and replaces each $q$-ary symbol by a unit weight binary vector of length $q$ using identity mapping. This corresponds to mapping a symbol $i \in [q]$ to the vector in $\{0, 1\}^q$ that has a 1 in the $i$'th position and zero everywhere else (see Section \ref{sec:KS_cons} for the full description). Note that the resulting $t \times N$ binary matrix $M$ has $t = n q$ tests. We shall later see that the choice $q = 4d$ and $n = \Theta(\log N)$ is appropriate, therefore, leading to $t = \Theta(d \log N)$ tests.

We note that for any defective set the cover decoder provides an exact recovery given that none of the non-defective items are covered by the defective set. Recall that a column $M_i$ is covered by a set of columns $M_{j_1}, \ldots, M_{j_l}$ with ${j_1}, \ldots, {j_l} \in [N]$ if the support of $M_i$ is contained in the union of the supports of columns  $M_{j_1}, \ldots, M_{j_l}$. Note that in the noiseless case the measurement vector $Y$ is given by the Boolean OR of the columns corresponding to the defective items. Therefore, the measurement vector $Y$ covers all defective items, and the cover decoder can achieve exact recovery if none of the non-defective items are covered by the measurement vector $Y$ (or equivalently the defective set).

For $s \subseteq [N]$, we define $\mathcal{A}^s$ as the event that there exists a non-defective column of $M$  that is covered by the defective set $s$. Define $\mathcal{A}_i^s$ as the event that the non-defective column $M_i$ ($i \notin s$) is covered by the defective set $s$. We can bound the probability of error as follows:
\begin{align}
P_{e} & \leq \sum \limits_{s \subseteq [N], |s| = d} 1(\mathcal{A}^s) \Pr(S=s) \nonumber \\
& \leq \dfrac{1}{\binom{N}{d}} \sum \limits_{s \subseteq [N], |s| = d} \ \sum \limits_{i \in [N] \backslash s} 1(\mathcal{A}_i^s) \nonumber \\ & =
\dfrac{1}{\binom{N}{d}}  \sum \limits_{i \in [N]} \ \sum \limits_{s \subseteq [N]/\{i\}, |s| = d}  1(\mathcal{A}_i^s) \nonumber \\ & 
= \dfrac{\binom{N-1}{d}}{\binom{N}{d}} \sum \limits_{i \in [N]} \ \dfrac{1}{\binom{N-1}{d}} \sum \limits_{s \subseteq [N]/\{i\}, |s| = d} 1(\mathcal{A}_i^s) 
\nonumber  \\
& =  \dfrac{N-d}{N} \sum \limits_{i \in [N]} \Pr\left(\mathcal{A}_i^{S_{[N]/\{i\}}}\right) \label{prob_error}
\end{align}
where in the last equation $S_{[N]/\{i\}}$ is uniformly distributed on the sets of size $d$ among the items in $[N]/\{i\}$ and $1(\cdot)$ denotes the indicator function of an event.

Fix any $n$ distinct elements $\alpha_1, \alpha_2, \ldots, \alpha_n$ from $\mathbb{F}_q$. We denote $\Psi \triangleq \{\alpha_1, \alpha_2, \ldots, \alpha_n\}$. We note that due to the structure of mapping to the binary vectors in the Kautz-Singleton construction, a column $M_i$ is covered by the random defective set $S$ if and only if the corresponding symbols of $M_i$ are contained in the union of symbols of $S$ in the RS code for all rows in $[n]$. Recall that there is a $k-1$ degree polynomial $f_m(X) = \sum_{i=0}^{k-1} m_i X^i$ corresponding to each column in the RS code and the corresponding symbols in the column are the evaluation of $f_m(X)$ at  $\alpha_1, \alpha_2, \ldots, \alpha_n$. Denoting $f_{m_i}(X)$ as the polynomial corresponding to the column $M_i$, we have
\begin{align*}
\Pr\left(\mathcal{A}_i^{S_{[N]/\{i\}}}\right) & = 
\Pr\left(f_{m_i}(\alpha) \in \left\{f_{m_j}(\alpha) : j \in S_{[N]/\{i\}} \right\} \ \forall \ \alpha \in \Psi\right) \\
& =
\Pr\left(0 \in \left\{f_{m_j}(\alpha) - f_{m_i}(\alpha) : j \in S_{[N]/\{i\}} \right\} \ \forall \ \alpha \in \Psi\right). 
\end{align*}
We note that the columns of the RS code contain all possible (at most) $k-1$ degree polynomials, therefore, the set  
$\left\{f_{m_j}(\alpha) - f_{m_i}(\alpha) : j \in {[N]/\{i\}}\right\}$ is sweeping through all possible (at most)  $k-1$ degree polynomials except the zero polynomial. Therefore, the randomness of $S_{[N]/\{i\}}$ that generates the random set $\left\{f_{m_j}(\alpha) - f_{m_i}(\alpha) : j \in S_{[N]/\{i\}} \right\}$ can be translated to the random set of polynomials $\{f_{m_j}(X) : j \in S'\}$ that is generated by picking $d$ nonzero polynomials of degree (at most) $k-1$ without replacement. This gives
\begin{align*}
\Pr\left(0 \in \left\{f_{m_j}(\alpha) - f_{m_i}(\alpha) : j \in S_{[N]/\{i\}} \right\} \ \forall \ \alpha \in \Psi\right)
= \Pr\left(0 \in \left\{f_{m_j}(\alpha) : j \in S' \right\} \ \forall \ \alpha \in \Psi \right).
\end{align*}
We define the random polynomial $h(X) \triangleq \prod \limits_{j \in S'} f_{m_j}(X)$. Note that
\begin{align*}
0 \in \{f_{m_j}(\alpha) : j \in S'\} \ \forall \ \alpha \in \Psi \ \Leftrightarrow \ h(\alpha) = 0 \ \forall \ \alpha \in \Psi.
\end{align*}
We next bound the number of roots of the polynomial $h(X)$. We will use the following result from \cite{raz}. 
\begin{lemma}[{{\cite[Lemma 3.9]{raz}}}]
Let $R_q(l, k)$ denote the set of nonzero polynomials over $\mathbb{F}_q$ of degree at most $k$ that have exactly $l$ distinct roots in $\mathbb{F}_q$. For all powers $q$ and integers $l, k,$
\begin{align*}
|R_q(l, k)| \leq q^{k+1} \cdot \dfrac{1}{l!}.
\end{align*}
\label{root}
\end{lemma}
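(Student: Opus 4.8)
The plan is to prove the bound by a direct overcounting argument based on divisibility, avoiding any need to count polynomials with \emph{exactly} $l$ roots head-on. The key observation is that if a nonzero polynomial $f$ of degree at most $k$ has a set of distinct $\mathbb{F}_q$-roots containing $\{r_1,\ldots,r_l\}\subseteq \mathbb{F}_q$, then $f$ is divisible by the degree-$l$ polynomial $P_{r_1,\ldots,r_l}(X)=\prod_{i=1}^{l}(X-r_i)$; conversely, any nonzero multiple of $P_{r_1,\ldots,r_l}$ of degree at most $k$ vanishes at each $r_i$ and hence has at least $l$ distinct roots. So instead of enumerating $R_q(l,k)$ directly, I would, for each unordered $l$-subset of $\mathbb{F}_q$, count the nonzero degree-$\le k$ multiples of the associated linear product, and then observe that summing these counts over all $l$-subsets dominates $|R_q(l,k)|$: a polynomial with exactly $l$ roots is counted exactly once (for the subset equal to its root set), and one with more than $l$ roots is counted several times, which is harmless for an upper bound.

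Concretely, first I would fix an $l$-subset $\{r_1,\ldots,r_l\}$ of $\mathbb{F}_q$ and count the nonzero polynomials of degree at most $k$ divisible by $P_{r_1,\ldots,r_l}$. Writing such a polynomial as $P_{r_1,\ldots,r_l}(X)\,h(X)$ with $\deg h\le k-l$, the number of choices for $h$ (including $h=0$) is $q^{k-l+1}$ when $l\le k$, so the number of nonzero such polynomials is at most $q^{k-l+1}$; when $l>k$ there are none, and the bound below holds vacuously. Summing over all $\binom{q}{l}$ choices of the root subset gives
\begin{align*}
|R_q(l,k)| \;\le\; \binom{q}{l}\, q^{k-l+1}.
\end{align*}

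Finally I would use $\binom{q}{l}=\frac{q(q-1)\cdots(q-l+1)}{l!}\le \frac{q^{l}}{l!}$ to conclude
\begin{align*}
|R_q(l,k)| \;\le\; \frac{q^{l}}{l!}\, q^{k-l+1} \;=\; q^{k+1}\cdot\frac{1}{l!},
\end{align*}
which is exactly the claimed bound and also covers the degenerate cases $l>k$ and $l>q$ trivially, since the right-hand side remains positive. I do not expect a genuine obstacle here: the only point requiring a moment's care is the gap between ``exactly $l$ roots'' and ``at least $l$ roots,'' and this is resolved precisely because we want an upper bound, so the subset-indexed sum cleanly over-counts $|R_q(l,k)|$.
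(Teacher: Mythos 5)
Your proof is correct: the divisibility/union-bound argument over $l$-subsets of $\mathbb{F}_q$ (each $f$ with exactly $l$ distinct roots is a nonzero degree-$\le k$ multiple of $\prod_{r \in T}(X-r)$ for its root set $T$, giving $|R_q(l,k)| \le \binom{q}{l}\,q^{k-l+1}$), combined with $\binom{q}{l} \le q^{l}/l!$, yields exactly the stated bound, and your treatment of the degenerate cases $l>k$ and $l>q$ is fine. Note that the paper does not prove this lemma at all---it imports it verbatim as Lemma 3.9 of the cited reference---so there is no in-paper proof to compare against; your argument is the standard one for this estimate and stands as a valid self-contained justification.
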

Let $r$ denote the number of roots of a random nonzero polynomial of degree at most $k-1$. One can observe that $\mathbb{E}[r] \leq 1$ by noting that there is exactly one value of $m_0$ that makes $f_m(X) = 0$ for any fixed $X$ and $m_1, \ldots, m_{k-1}$ and the inequality is due to excluding the zero polynomial. 
Furthermore, using Lemma \ref{root}, we get
\begin{align*}
\mathbb{E}[r^2] & \leq \sum \limits_{i = 1}^{k-1} \dfrac{i^2}{i!} 
\\ & = \sum \limits_{i = 1}^{k-1} \dfrac{i}{(i-1)!} 
\\ & = \sum \limits_{i = 1}^{k-1} \dfrac{i-1}{(i-1)!} + \sum \limits_{i = 1}^{k-1} \dfrac{1}{(i-1)!} 
\\ & < 2e
\end{align*}
where the first inequality is due to $\Pr(r = i) = |R_q(i, k-1)|/q^k \leq 1/i!$ from Lemma \ref{root}. Hence we can bound $\mathbb{E}[r^2] < 6$. We denote $r_i$ as the number of roots of the polynomial $f_{m_i}(X)$ and $r_h$ as the number of roots of the polynomial $h(X)$. Note that $r_h \leq \sum_{j \in S'} r_j$. We will use the following Bernstein concentration bound for sampling without replacement \cite{bardenet2015}:
\begin{proposition}[{{\cite[Proposition 1.4]{bardenet2015}}}]
Let $\mathcal{X} =\{ x_1, \ldots, x_N\}$ be a finite population of $N$ points and $X_1, \ldots, X_n$ be a random sample drawn without replacement from $\mathcal{X}$. Let $a = \min \limits_{1\leq i \leq N} x_i$ and $b = \max \limits_{1\leq i \leq N} x_i$. Then for all $\epsilon>0$, 
\[
\Pr\left( \dfrac{1}{n} \sum \limits_{i=1}^{n} X_i - \mu > \epsilon \right) \leq \exp \left(-\dfrac{n \epsilon^2}{2 \sigma^2 + (2/3)(b-a)\epsilon} \right) 
\] where $\mu = \frac{1}{N} \sum_{i=1}^{N} x_i $ is the mean of $\mathcal{X}$ and $\sigma^2 = \frac{1}{N} \sum_{i=1}^{N} (x_i - \mu)^2$ is the variance of $\mathcal{X}$.
\end{proposition}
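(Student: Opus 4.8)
The plan is to prove this as a standard Chernoff (Cram\'er) bound, using Hoeffding's reduction to dispose of the dependence introduced by sampling without replacement, together with a Bernstein-type single-variable moment-generating-function (MGF) estimate that produces the stated constants. Write $\mu = \frac1N\sum_{i=1}^N x_i$ and fix $\lambda > 0$. By the exponential Markov inequality,
\[
\Pr\left(\frac1n\sum_{i=1}^n X_i - \mu > \epsilon\right) = \Pr\left(\sum_{i=1}^n (X_i - \mu) > n\epsilon\right) \leq e^{-\lambda n\epsilon}\,\mathbb{E}\!\left[\exp\!\left(\lambda\sum_{i=1}^n (X_i-\mu)\right)\right],
\]
so the entire task reduces to controlling the joint MGF of the centered without-replacement sample and then optimizing over $\lambda$.

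The first key step removes the dependence. Let $Y_1, \ldots, Y_n$ be drawn \emph{with} replacement, i.e.\ i.i.d.\ uniform on $\mathcal{X}$. Hoeffding's reduction lemma states that for every continuous convex function $f$ one has $\mathbb{E} f(\sum_{i=1}^n X_i) \leq \mathbb{E} f(\sum_{i=1}^n Y_i)$; I would either invoke this classical fact or reprove it by exhibiting the without-replacement sum as a conditional expectation of the with-replacement sum over the pattern of repeated draws and applying Jensen's inequality. Taking $f = \exp(\lambda\,\cdot)$ and using independence of the $Y_i$ gives
\[
\mathbb{E}\!\left[\exp\!\left(\lambda\sum_{i=1}^n (X_i-\mu)\right)\right] \leq \left(\mathbb{E}\!\left[e^{\lambda(Y_1-\mu)}\right]\right)^{n}.
\]

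The second step is the single-variable bound. The centered draw $Z = Y_1 - \mu$ has mean zero, variance $\sigma^2$, and, since $Y_1 \in [a,b]$ and $\mu \in [a,b]$, satisfies $|Z| \leq b-a$. A bounded variable satisfies Bernstein's moment condition $\mathbb{E}[Z^k] \leq \frac{k!}{2}\sigma^2\big((b-a)/3\big)^{k-2}$ for all $k \geq 2$ (this reduces to checking $3^{k-2} \leq k!/2$), whence summing the exponential series as a geometric series yields, for $0 < \lambda < 3/(b-a)$,
\[
\mathbb{E}\!\left[e^{\lambda Z}\right] \leq \exp\!\left(\frac{\lambda^2\sigma^2/2}{1 - \lambda(b-a)/3}\right).
\]
Combining the three displays, the tail is at most $\exp\!\big(n\,q(\lambda)\big)$ with $q(\lambda) = -\lambda\epsilon + \frac{\lambda^2\sigma^2/2}{1-\lambda(b-a)/3}$, and the standard Bernstein choice $\lambda = \epsilon/\big(\sigma^2 + (b-a)\epsilon/3\big)$ gives $q(\lambda) \leq -\epsilon^2/\big(2\sigma^2 + \tfrac{2}{3}(b-a)\epsilon\big)$, which is exactly the claimed bound after multiplying by $n$.

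I expect Hoeffding's reduction to be the main obstacle: once the without-replacement MGF is dominated by the with-replacement one, the remainder is the textbook Bernstein/Chernoff computation, with the constants arranged so as to produce the range $(b-a)$ and the factor $2/3$. The reduction is the only place where the sampling-without-replacement structure genuinely enters, and making the Jensen argument precise (or pinning down the exact hypotheses under which the classical lemma applies) is the delicate part; everything downstream is deterministic calculus.
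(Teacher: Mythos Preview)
The paper does not prove this proposition at all: it is quoted verbatim from \cite{bardenet2015} and used as a black box in the proof of Theorem~\ref{thm:Exp_noiseless}. So there is no ``paper's own proof'' to compare against, and your proposal is supplying an argument that the authors chose to outsource.

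That said, your argument is the standard and correct one. The three ingredients---exponential Markov, Hoeffding's reduction from without-replacement to with-replacement sampling, and the Bernstein MGF bound for a bounded centered variable---combine exactly as you describe. A couple of minor points worth tightening if you write it out in full: (i) the moment-condition step really bounds $\mathbb{E}|Z|^k$, not $\mathbb{E}Z^k$, so you should note explicitly that $\lambda^k\mathbb{E}Z^k\le\lambda^k\mathbb{E}|Z|^k$ termwise for $\lambda>0$ before summing the exponential series; (ii) the admissibility check $\lambda<3/(b-a)$ for your optimizing choice $\lambda=\epsilon/(\sigma^2+(b-a)\epsilon/3)$ reduces to $\sigma^2>0$, and the degenerate case $\sigma^2=0$ is trivial; (iii) Hoeffding's reduction (Theorem~4 of Hoeffding, 1963) is indeed proved by writing the without-replacement sum as a conditional expectation of the with-replacement sum given the distinct values drawn, and applying Jensen---your sketch of this is accurate. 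With these small clarifications the proof is complete and matches what one finds in \cite{bardenet2015}.
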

We apply the inequality above to $\sum_{j \in S'} r_j$ and obtain
\begin{align*}
\Pr \left( \sum \limits_{j \in S'} r_j > 2d \right) & = \Pr \left( \dfrac{1}{d} \sum \limits_{j \in S'} r_j > 2 \right) \\ & \leq \Pr \left( \dfrac{1}{d} \sum \limits_{j \in S'} (r_j - \mathbb{E}[r_j]) > 1 \right) \\ & \leq \exp \left( - \dfrac{d}{12 +  k (2/3) } \right) \\ & \leq 
\exp \left( - \dfrac{d}{16 k} \right).
\end{align*}
We have $k = \log N/\log q$, hence, under the regime $d = \Omega(\log^2 N)$, the last quantity is bounded by $N^{- c \log q}$ for some constant $c > 0$. Hence the number of roots of the polynomial $h(X)$ is bounded by $2d$ with high probability.

Given the condition that the number of roots of the polynomial $h(X)$ is bounded by $2d$ and 
the random set of polynomials $\{f_{m_j}(X) : j \in S'\}$ is picked from the nonzero polynomials of degree at most $k-1$ without replacement, due to the symmetry in the position of the roots of the randomly selected polynomials, we claim that the probability of satisfying $h(\alpha) = 0 $ for all $ \alpha \in \Psi$ is bounded by the probability of covering $n$ elements from a field of size $q$ by picking $2d$ elements randomly without replacement. We next prove this claim. We define the set $R(h) \coloneqq \{ \alpha \in  \mathbb{F}_q : h(\alpha) = 0\}$
and we emphasize that this is not a multiset, i.e., the repeated roots appear as a single element. We begin with the following observation.
\begin{claim}
Let $l > 0$, and condition on the event that $\vert R(h) \vert = l$. Then $R(h)$ is uniformly distributed among all sets $\Lambda \subseteq \mathbb{F}_q$ of size $l$.
\end{claim}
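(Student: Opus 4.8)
The plan is to exploit a symmetry argument over the affine automorphisms of $\mathbb{F}_q$. The key observation is that the random polynomial $h(X) = \prod_{j \in S'} f_{m_j}(X)$ is obtained by multiplying together $d$ polynomials drawn uniformly without replacement from the set of all nonzero polynomials of degree at most $k-1$. Crucially, this underlying distribution on the factors is invariant under the substitution $X \mapsto X + c$ for any $c \in \mathbb{F}_q$: if $g(X)$ ranges uniformly over the nonzero polynomials of degree at most $k-1$, then so does $g(X+c)$, since $X \mapsto X+c$ is a bijection on that set (it preserves the property of being nonzero and does not increase the degree). Hence the distribution of the \emph{multiset} of factors $\{f_{m_j}(X+c) : j \in S'\}$ equals that of $\{f_{m_j}(X) : j \in S'\}$, and therefore $h(X+c)$ has the same distribution as $h(X)$.

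First I would make this precise: let $T_c$ denote the map on polynomials given by $(T_c g)(X) = g(X+c)$. Then $R(T_c h) = R(h) - c := \{\alpha - c : \alpha \in R(h)\}$, because $h(\alpha+c)=0$ iff $(T_c h)(\alpha) = 0$. Wait — more carefully, $(T_c h)(\alpha) = h(\alpha + c)$, so $\alpha \in R(T_c h)$ iff $\alpha + c \in R(h)$, i.e. $R(T_c h) = R(h) - c$. Since $h$ and $T_c h$ are identically distributed, conditioning on $|R(h)| = l$ (note $|R(T_ch)| = |R(h)|$, so the conditioning event is preserved), we get that $R(h)$ and $R(h) - c$ have the same conditional distribution on the set of $l$-subsets of $\mathbb{F}_q$ for every $c \in \mathbb{F}_q$. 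That is, the conditional law of $R(h)$ is invariant under the translation action of the additive group $\mathbb{F}_q$ on $l$-subsets.

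Next I would observe that translation invariance alone is not quite enough to force the uniform distribution on all $l$-subsets; one also needs invariance under the multiplicative action $X \mapsto aX$ for $a \in \mathbb{F}_q^\times$, which holds by the identical argument (the map $g(X) \mapsto g(aX)$ is again a degree-nonincreasing bijection on nonzero polynomials, giving $R(h) \mapsto a^{-1} R(h)$). Together, translations and nonzero scalings generate the group of affine transformations of $\mathbb{F}_q$, and this group acts transitively on \emph{ordered} pairs of distinct elements — hence on $l$-subsets for any single fixed $l$ the orbit structure must be examined. Actually the cleanest route: the affine group acts transitively on the set of all $l$-element subsets is \emph{false} in general (e.g. $l=3$ distinguishes collinear structure only in dimension $\geq 2$, but here we are in dimension $1$ so it \emph{is} transitive on $l$-subsets for $l \le q$, since any $l$ distinct points can be mapped to any other $l$ distinct points after we only need to match \emph{some} bijection, and the affine group is sharply $2$-transitive). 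Let me instead argue directly via symmetry in the roots as the paper hints: since the factors are exchangeable and each factor's root-set is, conditioned on its size, uniform among subsets of $\mathbb{F}_q$ of that size (this last fact follows from Lemma~\ref{root} together with the same affine-invariance applied to a single factor), the root set of the product inherits uniformity conditioned on its cardinality.

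The main obstacle I anticipate is making the ``inherits uniformity'' step rigorous: $R(h) = \bigcup_{j \in S'} R(f_{m_j})$ is a union of possibly-overlapping random sets, so one cannot simply multiply conditional probabilities. The right fix is to condition on the \emph{entire} tuple of cardinalities $(|R(f_{m_j})|)_{j \in S'}$ \emph{and} on the incidence pattern (which factors share which roots), and then invoke the affine-invariance of the joint law of $(R(f_{m_j}))_{j}$ to conclude that, given all that combinatorial data, the actual location of the root configuration is uniform over all placements in $\mathbb{F}_q$ consistent with it; averaging over the combinatorial data while further conditioning on the final size $|R(h)| = l$ then yields the uniform distribution on $l$-subsets, because every $l$-subset is hit with equal weight under each placement-orbit. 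I would write this as: for any two $l$-subsets $\Lambda, \Lambda'$, pick an affine map $\phi$ with $\phi(\Lambda) = \Lambda'$ (possible since the affine group of $\mathbb{F}_q$ is transitive on equal-size subsets of size at most $q$), push forward the whole construction through $\phi$, use that the law of $(f_{m_j})_{j \in S'}$ is $\phi$-invariant, and deduce $\Pr(R(h) = \Lambda \mid |R(h)| = l) = \Pr(R(h) = \Lambda' \mid |R(h)| = l)$.
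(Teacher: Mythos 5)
Your proof has a genuine gap at its final, load-bearing step. You correctly establish that the law of $h$ is invariant under the substitutions $X \mapsto aX + c$, but then you need the group these generate --- the one-dimensional affine group $\mathrm{AGL}(1,q)$ --- to act transitively on $l$-subsets of $\mathbb{F}_q$, and it does not for $l \geq 3$. An affine map multiplies all pairwise differences of a set by the same nonzero constant $a$, so the multiset of pairwise differences up to a common scalar is an orbit invariant: in $\mathbb{F}_7$, the set $\{0,1,2\}$ has a repeated difference while $\{0,1,3\}$ does not, so no affine map carries one to the other. (A counting argument gives the same conclusion: $\vert \mathrm{AGL}(1,q)\vert = q(q-1)$, which is smaller than $\binom{q}{l}$ for $3 \le l \le q-3$, so there must be several orbits.) Sharp $2$-transitivity on points does not upgrade to transitivity on larger subsets, and your parenthetical asserting that it does is where the argument breaks. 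The fallback you sketch --- conditioning on the incidence pattern and invoking ``affine-invariance of the joint law'' --- inherits the same defect, since you still only have a $2$-transitive group of symmetries at your disposal.

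The paper's proof supplies exactly the missing freedom: it does \emph{not} restrict to symmetries induced by substitutions in $X$. Given an \emph{arbitrary} bijection $\varphi$ of $\mathbb{F}_q$ with $\varphi(\Lambda) = \Lambda'$ (the full symmetric group is of course transitive on $l$-subsets), it lifts $\varphi$ to a map on polynomials via the unique factorization $f(X) = g_f(X)\prod_{\gamma_i \in R(f)}(X-\gamma_i)^{c_i}$ into a linear-factor-free part $g_f$ and the linear factors, setting $\Phi(f) = g_f(X)\prod_{\gamma_i \in R(f)}(X-\varphi(\gamma_i))^{c_i}$. This $\Phi$ preserves degree and nonzeroness, is a bijection on the pool of nonzero polynomials of degree at most $k-1$, and sends $R(f)$ to $\varphi(R(f))$; applied elementwise to $d$-sets of factors it gives a bijection $H_\Lambda \to H_{\Lambda'}$, so $\vert H_\Lambda\vert = \vert H_{\Lambda'}\vert$ and, since the factors are drawn uniformly without replacement, the conditional law of $R(h)$ given $\vert R(h)\vert = l$ is uniform. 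If you replace your affine maps with this root-relabeling bijection, the rest of your symmetry argument goes through.
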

\begin{proof}
For $f \in \mathbb{F}_q[X]$, we can write 
\begin{align*}
f(X) = g_f(X) \cdot \prod \limits_{\gamma_i \in R(f)} (X - \gamma_i)^{c_i},
\end{align*}
where $c_i$ is the corresponding multiplicity of the root $\gamma_i$ and $g_f \in \mathbb{F}_q[X]$ does not have any linear factor. We note that this decomposition is unique. 
For $\Lambda \subseteq \mathbb{F}_q$ of size $l$, let
\begin{align*}
H_{\Lambda} \coloneqq \left\{ \{f_1(X), \ldots,  f_d(X)\} : R\left(\prod_i f_i(X)\right) = \Lambda \right\}.
\end{align*}
Let $\Lambda' \subseteq \mathbb{F}_q$ such that $\vert \Lambda'  \vert = l$ and $\Lambda'  \neq \Lambda$. Then $\vert H_{\Lambda}  \vert = \vert H_{\Lambda'}  \vert$.
Indeed, let $\varphi : \mathbb{F}_q \rightarrow \mathbb{F}_q$ be a bijection such that $\varphi(\Lambda) = \Lambda'$. Then $\Phi : H_{\Lambda} \rightarrow H_{\Lambda'}$ given by
\begin{align*}
\Phi(f) = g_f(X) \cdot \prod \limits_{\gamma_i \in R(f)} (X - \varphi(\gamma_i))^{c_i},
\end{align*}
and $\Phi(\{f_1, \ldots, f_d\}) \coloneqq \{\Phi(f_1), \ldots, \Phi(f_d)\}$ is a bijection. 

We further note that $R(h) = \Lambda \Rightarrow \vert R(h) \vert = l$, so
\begin{align*}
\Pr \{ R(h) = \Lambda \given[\big]   \vert R(h) \vert = l \} &= \dfrac{\Pr \{ R(h) = \Lambda \}}{\Pr \{ \vert R(h) \vert = l \}} \\ &
= \dfrac{\Pr \{ \{f_1, \ldots, f_d\} \in H_{\Lambda} \}}{\Pr \{ \vert R(h) \vert = l \}} \\ &
 \stackrel{(i)}{=} \dfrac{\Pr \{ \{f_1, \ldots, f_d\} \in H_{\Lambda'} \}}{\Pr \{ \vert R(h) \vert = l \}} 
 \\ & = \Pr \{ R(h) = \Lambda' \given[\big]   \vert R(h) \vert = l \},
\end{align*}
where $(i)$ is due to $\vert H_{\Lambda}  \vert = \vert H_{\Lambda'}  \vert$ and we pick $f_1, \ldots, f_d$ uniformly without replacement.
\end{proof}

Based on this, if we ensure $n \leq 2d$, then it follows that
\begin{align*}
& \Pr \{ R(h) \supseteq \Psi \given[\big]   \vert R(h) \vert \leq 2d \} \\
& = \sum \limits_{l \leq 2d} \Pr \{ R(h) \supseteq \Psi \given[\big]   \vert R(h) \vert = l \} 
\Pr \{  \vert R(h) \vert = l \given[\big] \vert R(h) \vert \leq 2d \}
\\ & \leq
\max \limits_{n \leq l \leq 2d} \Pr \{ R(h) \supseteq \Psi \given[\big]   \vert R(h) \vert = l \} 
\\ & = \max \limits_{n \leq l \leq 2d} \dfrac{\binom{q-n}{l - n}}{\binom{q}{l}}.
\end{align*}
Let us fix $q = 4d$. We then have
\begin{align*}
\Pr \{ R(h) \supseteq \Psi \given[\big]   \vert R(h) \vert \leq 2d \}  & \leq \dfrac{\binom{4d-n}{2d - n}}{\binom{4d}{2d}} 
\\ &  =
\dfrac{(4d-n)!}{(2d-n)!(2d)!} \dfrac{(2d)! (2d)!}{(4d)!} \\ & 
= \dfrac{2d \ldots (2d - n + 1)}{4d  \ldots (4d - n + 1)} \\ & 
\leq \left( \dfrac{1}{2} \right)^n.
\end{align*}
Therefore, $\Pr(\mathcal{A}_i^S)$ is bounded by
\begin{align*}
\Pr(\mathcal{A}_i^S) & \leq \Pr \{ R(h) \supseteq \Psi \given[\big]   \vert R(h) \vert \leq 2d \} +
\Pr \{ \vert R(h) \vert > 2d \} \\ &
\leq \left( \dfrac{1}{2} \right)^n + N^{- c \log q}.
\end{align*}
Applying the summation over all $i \in [N]$ in \eqref{prob_error}, we obtain $P_e \leq N^{1 - c \log q} + N 2^{-n}$. Therefore, under the regime $d = \Omega(\log^2 N)$, the average probability of error can be bounded as $P_e \leq N^{- \Omega(\log q)} + N^{-\delta}$ by choosing $n = (1 + \delta) \log N$. The condition $n \leq 2d$ required in the proof is also satisfied under this regime. Note that the resulting $t \times N$ binary matrix $M$ has $t = n q = \Theta(d \log N)$ tests.

\subsection{Proof of Theorem \ref{thm:Exp_noisy}}
\label{appendix:noisy}

We begin with describing the decoding rule. Since we are considering the noisy model, we will slightly modify the cover decoder employed in the noiseless case. For any defective item with codeword weight $w$, in the noiseless outcome the tests in which this item participated will be all positive. On the other hand, when the noise is added, $w p$ of these tests will flip in expectation. Based on this observation (see {\bf No-CoMa} in \cite{chan14} for a more detailed discussion), we consider the following decoding rule. For any item $i \in [N]$, we first denote $w_i$ as the weight of the corresponding column $M_i$ and $\hat{w}_i$ as the number of rows $k \in [t]$ where both $M_{k, i} = 1$ and $Y_k = 1$. If $\hat{w}_i \geq w_i (1 - p(1 + \tau))$, then the $i$th item is declared as defective, else it is declared to be non-defective.

Under the aforementioned decoding rule, an error event happens either when $\hat{w}_i < w_i (1 - p(1 + \tau))$ for a defective item $i$ or $\hat{w}_i \geq w_i (1 - p(1 + \tau))$ for a non-defective item $i$. Using the union bound, we can bound the probability of error as follows:

\begin{align}
P_e
& \leq \dfrac{1}{\binom{N}{d}} \sum \limits_{s \subseteq [N], |s| = d} \bigg[ \ \sum \limits_{i \in [N] \backslash s} \Pr  \{ \textnormal{$\hat{w}_i  \geq w_i (1 - p(1 + \tau))$} \} 
+ \sum \limits_{i \in s} \Pr  \{ \textnormal{$\hat{w}_i  < w_i (1 - p(1 + \tau))$} \} \bigg] \nonumber
\\ & =
\dfrac{1}{\binom{N}{d}}  \sum \limits_{i \in [N]} \ \sum \limits_{s \subseteq [N]/\{i\}, |s| = d} \Pr  \{ \hat{w}_i  \geq w_i (1 - p(1 + \tau)) \} 
 +  \dfrac{1}{\binom{N}{d}} \sum \limits_{s \subseteq [N], |s| = d} \ \sum \limits_{i \in s} \Pr \{ \hat{w}_i  < w_i (1 - p(1 + \tau)) \} \nonumber \\ & 
= \dfrac{\binom{N-1}{d}}{\binom{N}{d}} \bigg( \sum \limits_{i \in [N]} \ \dfrac{1}{\binom{N-1}{d}} 
\sum \limits_{s \subseteq [N]/\{i\}, |s| = d} \Pr \{ \hat{w}_i  \geq w_i (1 - p(1 + \tau)) \}  \bigg) \nonumber
\\ &  \hspace{3.2in}
 + \dfrac{1}{\binom{N}{d}} \sum \limits_{s \subseteq [N], |s| = d} \ \sum \limits_{i \in s} \Pr  \{  \hat{w}_i  < w_i (1 - p(1 + \tau)) \} \nonumber \\ & 
= \dfrac{N-d}{N} \sum \limits_{i \in [N]} \Pr  \{ \hat{w}_i  \geq w_i (1 - p(1 + \tau)) \} 
+ \dfrac{1}{\binom{N}{d}} \sum \limits_{s \subseteq [N], |s| = d} \ \sum \limits_{i \in s} \Pr  
\{ \hat{w}_i  < w_i (1 - p(1 + \tau)) \} \label{pf2:noise} \\ & 
\eqqcolon P_1 + P_2, \nonumber
\end{align}
where we denote the first term of \eqref{pf2:noise} as $P_1$ and the second one as $P_2$ in the last equation. We point out that in the first term of \eqref{pf2:noise} the randomness is both due to the noise and the defective set that is uniformly distributed among the items in $[N]/\{i\}$ whereas in the second term the randomness is due to the noise.

We will employ the Kautz-Singleton construction which takes a $[n, k]_q$ RS code and replaces each $q$-ary symbol by unit weight binary vectors of length $q$ using identity mapping. This corresponds to mapping a symbol $i \in [q]$ to the vector in $\{0, 1\}^q$ that has a 1 in the $i$'th position and zero everywhere else (see Section \ref{sec:KS_cons} for the full description). Note that the resulting $t \times N$ binary matrix $M$ has $t = n q$ tests. We shall later see that the choice $q = 24d$ and $n = \Theta(\log N)$ is appropriate, therefore, leading to $t = \Theta(d \log N)$ tests. Fix any $n$ distinct elements $\alpha_1, \alpha_2, \ldots, \alpha_n$ from $\mathbb{F}_q$. We denote $\Psi \triangleq \{\alpha_1, \alpha_2, \ldots, \alpha_n\}$.

We begin with $P_2$. Fix any defective set $s$ in $[N]$ with size $d$ and fix an arbitrary element $i$ of this set. We first note that $w_i = n$ due to the structure of the Kautz-Singleton construction. We further note that before the addition of noise the noiseless outcome will have positive entries corresponding to the ones where $M_{k, i} = 1$. Therefore $\Pr \{ \hat{w}_i  < w_i (1 - p(1 + \tau)) \} $ only depends on the number of bit flips due to the noise. Using Hoeffding's inequality, we have
\begin{align*}
\Pr \{ \hat{w}_i  < w_i (1 - p(1 + \tau)) \} \leq e^{-2 n p^2 \tau^2}.
\end{align*}
Summing over the $d$ defective items $i \in s$, we get $P_2 \leq d e^{-2 n p^2 \tau^2}$. 

We continue with $P_1$. We fix an item $i \in [N]$ and note that $w_i = n$. We similarly define the random polynomial $h(X) \triangleq \prod \limits_{j \in S} f_{m_j}(X)$. Let $\mathcal{A}$ be the event of $h(X)$ having at most $2d$ number of roots. We then have
\begin{align}
\Pr \{ \hat{w}_i  \geq w_i (1 - p(1 + \tau)) \}  & =  \Pr  \{ \hat{w}_i  \geq w_i (1 - p(1 + \tau))  | \mathcal{A}  \} \Pr \{ \mathcal{A} \} 
+ \Pr  \{ \hat{w}_i  \geq w_i (1 - p(1 + \tau)) | \mathcal{A}^c \} \Pr \{ \mathcal{A}^c \} \nonumber \\ & \leq 
\Pr  \{ \hat{w}_i  \geq w_i (1 - p(1 + \tau)) | \mathcal{A} \} + \Pr \{ \mathcal{A}^c \}.
\label{x1}
\end{align}
Following similar steps as in the proof of Theorem \ref{thm:Exp_noiseless} we obtain $\Pr \{ \mathcal{A}^c \} \leq N^{- c \log q}$ for some constant $c > 0$ in the regime $d = \Omega(\log^2 N)$. 

We next bound the first term in \eqref{x1}. We choose $q = 24d$ and define the random set $\Upsilon = \{ \alpha \in \Psi : f_{m_i}(\alpha) \in \{f_{m_j}(\alpha) : j \in S\} \}$. We then have
\begin{align*}
\Pr  \{ \hat{w}_i  \geq w_i (1 - p(1 + \tau))| \mathcal{A} \}  & = 
\Pr  \{ \hat{w}_i  \geq w_i (1 - p(1 + \tau)) | \mathcal{A}, |\Upsilon| \leq n/4 \} \Pr \{ |\Upsilon| \leq n/4 | \mathcal{A} \} 
\\ & \hspace{1in}
+ \Pr \{ \hat{w}_i  \geq w_i (1 - p(1 + \tau)) | \mathcal{A}, |\Upsilon| > n/4 \} \Pr \{ |\Upsilon| > n/4 | \mathcal{A} \} 
\\ & \leq 
\Pr \{ \hat{w}_i  \geq w_i (1 - p(1 + \tau)) | \mathcal{A}, |\Upsilon| \leq n/4 \}  + \Pr \{ |\Upsilon| > n/4 | \mathcal{A} \}.
\end{align*}
Let us first bound the second term $\Pr \{ |\Upsilon| > n/4 | \mathcal{A} \}$. We note that
\begin{align*}
|\Upsilon| & = |\{ \alpha \in \Psi : f_{m_i}(\alpha) \in \{f_{m_j}(\alpha) : j \in S\} \}| \\ &
= |  \{ \alpha \in \Psi : 0 \in \{f_{m_j}(\alpha) - f_{m_i}(\alpha) : j \in S\} \} | \\ &
= |\{ \alpha \in \Psi : 0 \in \{f_{m_j}(\alpha) : j \in S'\} \}|
\end{align*}
where in the last equality the random set of polynomials $\{f_{m_j}(X) : j \in S'\}$ is generated by picking $d$ nonzero polynomials of degree at most $k-1$ without replacement. This holds since $i \notin S$ and 
the columns of the RS code contain all possible (at most) $k-1$ degree polynomials, therefore, the randomness of $\{f_{m_j}(\alpha) - f_{m_i}(\alpha) : j \in S\}$ can be translated to the random set of polynomials $\{f_{m_j}(X) : j \in S'\}$ that is generated by picking $d$ nonzero polynomials of degree (at most) $k-1$ without replacement. Following similar steps of the proof of Theorem \ref{thm:Exp_noiseless} we can bound $\Pr \{ |\Upsilon| > n/4 | \mathcal{A} \}$ by considering the probability of having at least $n/4$ symbols from $\Psi$ when we pick $2d$ symbols from $[q]$ uniformly at random without replacement. Hence, if we ensure $n \leq 8d$, then we have
\begin{align*}
\Pr \{ |\Upsilon| > n/4 | \mathcal{A} \} & \leq \dfrac{\binom{n}{n/4} \binom{q-n/4}{2d - n/4}}{\binom{q}{2d}} \\ & = \dfrac{\binom{n}{n/4} \binom{24d-n/4}{2d - n/4}}{\binom{24d}{2d}} \\ & \leq (4e)^{n/4}
\dfrac{(24d-n/4)!}{(2d-n/4)!(22d)!} \dfrac{(2d)! (22d)!}{(24d)!}
\\ & = (4e)^{n/4} \dfrac{2d (2d-1) \ldots (2d - n/4 + 1)}{24d (24d - 1) \ldots (24d - n/4 + 1)} \\ & \leq \left( \dfrac{4e}{12} \right)^{n/4}
\end{align*}
where we use $\binom{n}{k} \leq (en/k)^k$ in the second inequality.

We continue with $\Pr \{ \hat{w}_i  \geq w_i (1 - p(1 + \tau)) | \mathcal{A}, |\Upsilon| \leq n/4 \}$. Note that $w_i = n$. We further note that
\begin{align*}
\mathbb{E}[\hat{w}_i] = \mathbb{E}[  \mathbb{E}[ \hat{w}_i | \Upsilon]] = 
 \mathbb{E}[ |\Upsilon| ] (1 - p) + (n -  \mathbb{E}[ |\Upsilon| ]) p.
\end{align*}
Since $p \in (0, 0.5)$ we have $\mathbb{E}[\hat{w}_i \ | \ |\Upsilon| \leq n/4] \leq (n/4) (1 - p) + (3n/4) p = n/4 + (n/2) p$. Using Hoeffding's inequality, we have
\begin{align*}
\Pr \{ \hat{w}_i  \geq w_i (1 - p(1 + \tau)) | \mathcal{A}, |\Upsilon| \leq n/4 \} &  
\leq \Pr \{\hat{w}_i - \mathbb{E}[\hat{w}_i]  \geq n(3/4 - 3p/2 - p\tau) | \mathcal{A}, |\Upsilon| \leq n/4 \} 
\\ & \leq 
e^{-2n (3/4 - 3p/2 - p \tau)^2}
\end{align*}
where the condition $3/4 - 3p/2 - p \tau > 0$ or $\tau < (3/4 - 3p/2)/p$ can be satisfied with our choice of free parameter $\tau$ since $p < 1/2$. Combining everything, we obtain
\begin{align*}
P_e & \leq N^{1 - c \log q} + N (e/3)^{n/4} + N  e^{-2n (3/4 - 3p/2 - p \tau)^2} 
+ d e^{-2 n p^2 \tau^2} \\ & \leq
N^{1 - c \log q} + N (e/3)^{n/4} + N  e^{-2n (3/4 - 3p/2 - p \tau)^2} 
+ N e^{-2 n p^2 \tau^2} \\ & =
N^{-\Omega(\log q)} + e^{\log N - \log(3/e) n/4} + 2 N e^{\log N - 9/8 (0.5 - p)^2 n}
\end{align*}
where in the last step we pick $\tau = \frac{3(0.5-p)}{4p}$. Therefore, under the regime $d = \Omega(\log^2 N)$, the average probability of error can be bounded as $P_e \leq N^{- \Omega(\log q)} + 3N^{-\delta}$ by choosing $n = \max \{ \frac{4}{\log(3/e)}, \frac{8}{9(0.5-p)^2}\}(1 + \delta) \log N$. The condition $n \leq 8d$ required in the proof is also satisfied under this regime. Note that the resulting $t \times N$ binary matrix $M$ has $t = n q = \Theta(d \log N)$ tests.

\subsection{Proof of Theorem \ref{thm:eff_dec}}
\label{appendix:eff_dec}

We begin with the noiseless case. We will use a recursive approach to obtain an efficiently decodable group testing matrix. Let $M^{\textnormal{ED}}_n$ denote such a matrix with $n$ columns in the recursion and $M^{\textnormal{KS}}_n$ denote the matrix with $n$ columns obtained by the Kautz-Singleton construction. Note that the final matrix is $M^{\textnormal{ED}}_N$.  Let $t^{\textnormal{ED}}(d, n, \epsilon)$ and $t^{\textnormal{KS}}(d, n, \epsilon)$ denote the number of tests for $M^{\textnormal{ED}}_n$ and $M^{\textnormal{KS}}_n$ respectively to detect at most $d$ defectives among $n$ columns with average probability of error $\epsilon$. We further define $D^{\textnormal{ED}}(d, n, \epsilon)$ to be the decoding time for $M^{\textnormal{ED}}_n$ with $t^{\textnormal{ED}}(d, n, \epsilon)$ rows.

We first consider the case $N = d^{2^i}$ for some non-negative integer $i$. The base case is $i = 0$, i.e., $N = d$ for which we can use individual testing and have $t^{\textnormal{ED}}(d, d, \epsilon) = d$ and $D^{\textnormal{ED}}(d, d, \epsilon) = O(d)$. For $i > 0$, we use $t^{\textnormal{ED}}(d, \sqrt{N}, \epsilon/4) \times \sqrt{N}$ matrix $M^{\textnormal{ED}}_{\sqrt{N}}$ to construct two $t^{\textnormal{ED}}(d, \sqrt{N}, \epsilon/4) \times N$ matrices $M^{(F)}$ and $M^{(L)}$ as follows. 
The $j$th column of $M^{\textnormal{ED}}_{\sqrt{N}}$ for $j \in [\sqrt{N}]$ is identical to all $i$th columns of $M^{(F)}$ for $i \in [N]$ if the \textit{first} $\frac{1}{2} \log N$ bits of $i$ is $j$ where $i$ and $j$ are considered as their respective binary representations. Similarly, the $j$th column of $M^{\textnormal{ED}}_{\sqrt{N}}$ for $j \in [\sqrt{N}]$ is identical to all $i$th columns of $M^{(L)}$ for $i \in [N]$ if the \textit{last} $\frac{1}{2} \log N$ bits of $i$ is $j$. We finally construct $M^{\textnormal{KS}}_N$ that achieves $\epsilon/2$ average probability of error and stack $M^{(F)}$, $M^{(L)}$, and  $M^{\textnormal{KS}}_N$ to obtain the final matrix $M^{\textnormal{ED}}_N$. Note that, this construction gives us the following recursion in terms of the number of tests
\begin{align*}
t^{\textnormal{ED}}(d, N, \epsilon) = 2 t^{\textnormal{ED}}(d, \sqrt{N}, \epsilon/4) + t^{\textnormal{KS}}(d, N, \epsilon/2).
\end{align*}
When $N = d^{2^i}$, note that $2^i = \log_d N$ and $i = \log \log_d N$. To solve for $t^{\textnormal{ED}}(d, d^{2^i}, \epsilon)$, we iterate the recursion as follows.
\begin{align}
t^{\textnormal{ED}}(d, d^{2^i}, \epsilon) & = 2 t^{\textnormal{ED}}(d, d^{2^{i-1}}, \epsilon/4) + t^{\textnormal{KS}}(d, d^{2^i}, \epsilon/2) \nonumber \\ &
=  4 t^{\textnormal{ED}}(d, d^{2^{i-2}}, \epsilon/16) + 2 t^{\textnormal{KS}}(d, d^{2^{i-1}}, \epsilon/8) + t^{\textnormal{KS}}(d, d^{2^i}, \epsilon/2) \nonumber \\ & \ \ \vdots 
\nonumber \\ & = 2^i t^{\textnormal{ED}}(d, d, \epsilon/2^{2i}) + \sum \limits_{j=0}^{i-1} 2^j  t^{\textnormal{KS}}(d, d^{2^{i-j}}, \epsilon/2^{j+1}) \nonumber \\ & =
2^i \cdot d + \sum \limits_{j=0}^{i-1} 2^j \cdot 4 d \log\left(d^{2^{i-j}}/\left(\epsilon/2^{j+1}\right)\right)
\label{ks_eps} \\ & = 2^i \cdot d + \sum \limits_{j=0}^{i-1} 2^j \cdot 4 d \left( 2^{i-j} \log d + (j+1) \log 2 + \log(1/\epsilon) \right) \nonumber \\ & \leq
2^i \cdot d + i \cdot 2^i \cdot 4d \log d + 4  d \sum \limits_{j=0}^{i-1} 2^j (j+1) +  2^i \cdot 4d \log(1/\epsilon) \nonumber \\ & \leq 
2^i \cdot d + i \cdot 2^i \cdot 4d \log d + i \cdot 2^i \cdot 4  d +  2^i \cdot 4d \log(1/\epsilon) \label{ks_eps_last}
\end{align}
where in \eqref{ks_eps} for simplicity we ignore the term $N^{- \Omega(\log q)}$ in the probability of error for Theorem \ref{thm:Exp_noiseless} and take $t^{\textnormal{KS}}(d, N, \epsilon) = 4 d \log N/\epsilon$. Replacing $2^i = \log_d N$ and $i = \log \log_d N$ in \eqref{ks_eps_last}, it follows that 
\begin{align*}
t^{\textnormal{ED}}(d, N, \epsilon) & = O\left(d \log N \log \log_d N  
+   d \log_d N \log\left( (\log_d N)/\epsilon \right)\right).
\end{align*}
Note that this gives $t^{\textnormal{ED}}(d, N) = O(d \log N \log \log_d N)$ in the case where $\epsilon = \Theta(1)$.

In the more general case, let $i \geq 1$ be the smallest integer such that $d^{2^{i-1}} < N \leq d^{2^i}$. It follows that $i < \log \log_d N + 1$. We can construct $M^{\textnormal{ED}}_N$ from $M^{\textnormal{ED}}_{d^{2^i}}$ by removing its last $d^{2^i} - N$ columns. We can operate on $M^{\textnormal{ED}}_N$ as if the removed columns were all defective. Therefore the number of tests satisfies $t^{\textnormal{ED}}(d, N) = O(d \log N \log \log_d N)$. 

We next describe the decoding process. We run the decoding algorithm for $M^{\textnormal{ED}}_{\sqrt{N}}$ with the components of the outcome vector $Y$ corresponding to $M^{(F)}$ and $M^{(L)}$ to compute the estimate sets $\hat{S}^{(F)}$ and $\hat{S}^{(L)}$. By induction and the union bound, the set $S' = \hat{S}^{(F)} \times \hat{S}^{(L)}$ contains all the indices $i \in S$ with error probability at most $\epsilon/2$. We further note that $\vert S' \vert \leq d^2$. We finally apply the naive cover decoder to the component of $M^{\textnormal{ED}}_{N}$ corresponding to $M^{\textnormal{KS}}_{N}$ over the set $S'$ to compute the final estimate $\hat{S}$ which can be done with an additional $O(d^2 \cdot t^{\textnormal{KS}}(d, N, \epsilon/2))$ time. By the union bound overall probability of error is bounded by $\epsilon$. This decoding procedure gives us the following recursion in terms of the decoding complexity
\begin{align*}
D^{\textnormal{ED}}(d, N, \epsilon) = 2 D^{\textnormal{ED}}(d, \sqrt{N}, \epsilon/4) + O(d^2 \cdot t^{\textnormal{KS}}(d, N, \epsilon/2)).
\end{align*}
When $N = d^{2^i}$, to solve for $D^{\textnormal{ED}}(d, d^{2^i}, \epsilon)$, we iterate the recursion as follows.
\begin{align}
D^{\textnormal{ED}}(d, d^{2^i}, \epsilon) & = 2 D^{\textnormal{ED}}(d, d^{2^{i-1}}, \epsilon/4) + c \cdot d^2 \cdot t^{\textnormal{KS}}(d, d^{2^i}, \epsilon/2) \nonumber \\ &
=  4 D^{\textnormal{ED}}(d, d^{2^{i-2}}, \epsilon/16) + 2c \cdot d^2 \cdot t^{\textnormal{KS}}(d, d^{2^{i-1}}, \epsilon/8) + c \cdot d^2 \cdot t^{\textnormal{KS}}(d, d^{2^i}, \epsilon/2) \nonumber \\ & \ \ \vdots 
\nonumber \\ & = 2^i D^{\textnormal{ED}}(d, d, \epsilon/2^{2i}) + \sum \limits_{j=0}^{i-1} 2^j c \cdot d^2 \cdot  t^{\textnormal{KS}}(d, d^{2^{i-j}}, \epsilon/2^{j+1}) \nonumber \\ & =
2^i \cdot O(d) + \sum \limits_{j=0}^{i-1} 2^j c \cdot 4 d^3 \log\left(d^{2^{i-j}}/\left(\epsilon/2^{j+1}\right)\right)
\nonumber  \\ & \leq
2^i \cdot O(d) + i \cdot 2^i \cdot 4cd^3 \log d + i \cdot 2^i \cdot 4 c d^3 +  2^i \cdot 4cd^3 \log(1/\epsilon) \label{ks_dec_last}
\end{align}
where \eqref{ks_dec_last} is obtained in the same way as \eqref{ks_eps_last}. Replacing $2^i = \log_d N$ and $i = \log \log_d N$ in \eqref{ks_dec_last}, it follows that
\begin{align*}
D^{\textnormal{ED}}(d, N, \epsilon) & = O\left(d^3 \log N \log \log_d N 
+  d^3 \log_d N \log\left( (\log_d N)/\epsilon \right)\right).
\end{align*}
Note that this gives $D^{\textnormal{ED}}(d, N) = O(d^3 \log N \log \log_d N)$ in the case where $\epsilon = \Theta(1)$.

The noisy case follows similar lines except the difference is that in the base case where $N = d$, we cannot use individual testing due to the noise. In this case we can do individual testing with repetitions which requires $t^{\textnormal{ED}}(d, d, \epsilon) = O(d \log(d/\epsilon))$ and $D^{\textnormal{ED}}(d, d, \epsilon) = O(d \log(d/\epsilon))$. We can proceed similarly as in the noiseless case and show that $t^{\textnormal{ED}}(d, N) = O(d \log N \log \log_d N)$ and $D^{\textnormal{ED}}(d, N) = O(d^3 \log N \log \log_d N)$.

\section*{Acknowledgements}
The third author would like to thank Atri Rudra and Hung Ngo for helpful conversations. We thank the anonymous reviewers for helpful comments and suggestions.

\bibliographystyle{IEEEbib}
\bibliography{references}

\begin{thebibliography}{10}

\bibitem{dorfman1943detection}
R.~Dorfman,
\newblock ``The detection of defective members of large populations,''
\newblock {\em The Annals of Mathematical Statistics}, vol. 14, no. 4, pp.
  436--440, 1943.

\bibitem{chen2008decoding}
H-B. Chen and F.~K. Hwang,
\newblock ``A survey on nonadaptive group testing algorithms through the angle
  of decoding,''
\newblock {\em Journal of Combinatorial Optimization}, vol. 15, no. 1, pp.
  49--59, 2008.

\bibitem{ganesan2017learning}
A.~Ganesan, S.~Jaggi, and V.~Saligrama,
\newblock ``Learning immune-defectives graph through group tests,''
\newblock {\em IEEE Transactions on Information Theory}, 2017.

\bibitem{malioutov2013exact}
D.~Malioutov and K.~Varshney,
\newblock ``Exact rule learning via boolean compressed sensing,''
\newblock in {\em International Conference on Machine Learning}, 2013, pp.
  765--773.

\bibitem{gilbert2008group}
A.~C. Gilbert, M.~A. Iwen, and M.~J. Strauss,
\newblock ``Group testing and sparse signal recovery,''
\newblock in {\em Signals, Systems and Computers, 2008 42nd Asilomar Conference
  on}. IEEE, 2008, pp. 1059--1063.

\bibitem{emad2014poisson}
A.~Emad and O.~Milenkovic,
\newblock ``Poisson group testing: A probabilistic model for nonadaptive
  streaming boolean compressed sensing,''
\newblock in {\em Acoustics, Speech and Signal Processing (ICASSP), 2014 IEEE
  International Conference on}. IEEE, 2014, pp. 3335--3339.

\bibitem{Berger1984}
T.~Berger, N.~Mehravari, D.~Towsley, and J.~Wolf,
\newblock ``Random multiple-access communication and group testing,''
\newblock {\em Communications, IEEE Transactions on}, vol. 32, no. 7, pp. 769
  -- 779, jul 1984.

\bibitem{wolf1985born}
J.~K. Wolf,
\newblock ``Born again group testing: Multiaccess communications,''
\newblock {\em Information Theory, IEEE Transactions on}, vol. 31, no. 2, pp.
  185--191, 1985.

\bibitem{luo2008neighbor}
J.~Luo and D.~Guo,
\newblock ``Neighbor discovery in wireless ad hoc networks based on group
  testing,''
\newblock in {\em Communication, Control, and Computing, 2008 46th Annual
  Allerton Conference on}. IEEE, 2008, pp. 791--797.

\bibitem{Fletcher2009}
A.~K. Fletcher, V.K. Goyal, and S.~Rangan,
\newblock ``A sparsity detection framework for on-off random access channels,''
\newblock in {\em Information Theory, 2009. ISIT 2009. IEEE International
  Symposium on}. IEEE, 2009, pp. 169--173.

\bibitem{ngo2000survey}
H.~Q. Ngo and D-Z. Du,
\newblock ``A survey on combinatorial group testing algorithms with
  applications to dna library screening,''
\newblock {\em Discrete mathematical problems with medical applications}, vol.
  55, pp. 171--182, 2000.

\bibitem{du2000combinatorial}
D.~Du and F.~K. Hwang,
\newblock {\em Combinatorial group testing and its applications}, vol.~12,
\newblock World Scientific, 2000.

\bibitem{atia2009}
G.~K. Atia and V.~Saligrama,
\newblock ``Boolean compressed sensing and noisy group testing,''
\newblock {\em Information Theory, IEEE Transactions on}, vol. 58, no. 3, pp.
  1880--1901, 2012.

\bibitem{sejdinovic2010}
D.~Sejdinovic and O.~Johnson,
\newblock ``Note on noisy group testing: Asymptotic bounds and belief
  propagation reconstruction,''
\newblock {\em CoRR}, vol. abs/1010.2441, 2010.

\bibitem{chan2001probabilisticgt}
C.~L. Chan, P.~H. Che, S.~Jaggi, and V.~Saligrama,
\newblock ``Non-adaptive probabilistic group testing with noisy measurements:
  Near-optimal bounds with efficient algorithms,''
\newblock in {\em 2011 49th Annual Allerton Conference on Communication,
  Control, and Computing (Allerton)}, Sept 2011, pp. 1832--1839.

\bibitem{cevher16}
J.~Scarlett and V.~Cevher,
\newblock ``Phase {T}ransitions in {G}roup {T}esting,''
\newblock in {\em {ACM}-{SIAM} {S}ymposium on {D}iscrete {A}lgorithms
  ({SODA})}, 2016.

\bibitem{johnson14}
M.~Aldridge, L.~Baldassini, and O.~Johnson,
\newblock ``Group testing algorithms: Bounds and simulations,''
\newblock {\em IEEE Transactions on Information Theory}, vol. 60, no. 6, pp.
  3671--3687, June 2014.

\bibitem{scarlett18}
O.~Johnson, M.~Aldridge, and J.~Scarlett,
\newblock ``Performance of group testing algorithms with near-constant tests
  per item,''
\newblock {\em IEEE Transactions on Information Theory}, vol. 65, no. 2, pp.
  707--723, Feb 2019.

\bibitem{scarlett_cevher_17}
J.~Scarlett and V.~Cevher,
\newblock ``Near-optimal noisy group testing via separate decoding of items,''
\newblock {\em IEEE Journal of Selected Topics in Signal Processing}, vol. 12,
  no. 5, pp. 902--915, Oct 2018.

\bibitem{johnson_noisy_18}
J.~{Scarlett} and O.~{Johnson},
\newblock ``{Noisy Non-Adaptive Group Testing: A (Near-)Definite Defectives
  Approach},''
\newblock {\em ArXiv e-prints}, Aug. 2018.

\bibitem{d1982bounds}
A.~G. D'yachkov and V.~V. Rykov,
\newblock ``Bounds on the length of disjunctive codes,''
\newblock {\em Problemy Peredachi Informatsii}, vol. 18, no. 3, pp. 7--13,
  1982.

\bibitem{furedi1996onr}
Z.~Furedi,
\newblock ``On r-cover-free families,''
\newblock {\em Journal of Combinatorial Theory, Series A}, vol. 73, no. 1, pp.
  172--173, 1996.

\bibitem{kautz1964}
W.~Kautz and R.~Singleton,
\newblock ``Nonrandom binary superimposed codes,''
\newblock {\em IEEE Transactions on Information Theory}, vol. 10, no. 4, pp.
  363--377, October 1964.

\bibitem{porat2008}
E.~Porat and A.~Rothschild,
\newblock ``Explicit non-adaptive combinatorial group testing schemes,''
\newblock {\em Automata, Languages and Programming}, pp. 748--759, 2008.

\bibitem{mazumdar16}
A.~Mazumdar,
\newblock ``Nonadaptive group testing with random set of defectives,''
\newblock {\em IEEE Transactions on Information Theory}, vol. 62, no. 12, pp.
  7522--7531, Dec 2016.

\bibitem{ngo11}
H.~Q. Ngo, E.~Porat, and A.~Rudra,
\newblock ``Efficiently decodable error-correcting list disjunct matrices and
  applications,''
\newblock in {\em Automata, Languages and Programming}, Berlin, Heidelberg,
  2011, pp. 557--568, Springer Berlin Heidelberg.

\bibitem{macula04}
A.~J. Macula, V.~V. Rykov, and S.~Yekhanin,
\newblock ``Trivial two-stage group testing for complexes using almost disjunct
  matrices,''
\newblock {\em Discrete Applied Mathematics}, vol. 137, no. 1, pp. 97 -- 107,
  2004.

\bibitem{malyutov78}
M.~B. Malyutov,
\newblock ``The separating property of random matrices,''
\newblock {\em Math. Notes}, vol. 23, no. 1, pp. 84–--91, 1978.

\bibitem{Zhigljavsky03}
A.~Zhigljavsky,
\newblock ``Probabilistic existence theorems in group testing,''
\newblock {\em J. Statist. Planning Inference}, vol. 115, no. 1, pp. 1--43,
  2003.

\bibitem{RS}
I.~S. Reed and G.~Solomon,
\newblock ``Polynomial codes over certain finite fields,''
\newblock {\em Journal of the Society for Industrial and Applied Mathematics},
  vol. 8, no. 2, pp. 300--304, 1960.

\bibitem{Erlich15_code}
Y.~Erlich,
\newblock ``Combinatorial pooling using {R}{S} codes,''
  \url{https://github.com/TeamErlich/pooling}, 2017.

\bibitem{Erlich15}
Y.~Erlich, A.~Gilbert, H.~Ngo, A.~Rudra, N.~Thierry-Mieg, M.~Wootters,
  D.~Zielinski, and O.~Zuk,
\newblock ``Biological screens from linear codes: theory and tools,''
\newblock {\em bioRxiv}, 2015.

\bibitem{mahdi2009}
M.~Cheraghchi,
\newblock ``Noise-resilient group testing: Limitations and constructions,''
\newblock in {\em Fundamentals of Computation Theory}, Berlin, Heidelberg,
  2009, pp. 62--73, Springer Berlin Heidelberg.

\bibitem{indyk2010}
P.~Indyk, H.~Q. Ngo, and A.~Rudra,
\newblock ``Efficiently decodable non-adaptive group testing,''
\newblock in {\em Proceedings of the Twenty-First Annual ACM-SIAM Symposium on
  Discrete Algorithms}. Society for Industrial and Applied Mathematics, 2010,
  pp. 1126--1142.

\bibitem{lee2016saffron}
K.~Lee, R.~Pedarsani, and K.~Ramchandran,
\newblock ``Saffron: A fast, efficient, and robust framework for group testing
  based on sparse-graph codes,''
\newblock in {\em Information Theory (ISIT), 2016 IEEE International Symposium
  on}. IEEE, 2016, pp. 2873--2877.

\bibitem{jaggi2017}
S.~Cai, M.~Jahangoshahi, M.~Bakshi, and S.~Jaggi,
\newblock ``Efficient algorithms for noisy group testing,''
\newblock {\em IEEE Transactions on Information Theory}, vol. 63, no. 4, pp.
  2113--2136, April 2017.

\bibitem{chan14}
C.~L. Chan, S.~Jaggi, V.~Saligrama, and S.~Agnihotri,
\newblock ``Non-adaptive group testing: Explicit bounds and novel algorithms,''
\newblock {\em IEEE Transactions on Information Theory}, vol. 60, no. 5, pp.
  3019--3035, May 2014.

\bibitem{rykov}
A.~G. D'yachkov and V.~V. Rykov,
\newblock ``A survey of superimposed code theory,''
\newblock {\em Problems Control Inform. Theory/Problemy Upravlen. Teor.
  Inform.}, vol. 12, pp. 229--242, 1983.

\bibitem{bonis05}
A.~De Bonis., L.~Gasieniec, and U.~Vaccaro,
\newblock ``Optimal two-stage algorithms for group testing problems,''
\newblock {\em SIAM J. Comput.}, vol. 34, no. 5, pp. 1253--1270, 2005.

\bibitem{rashad90}
A.~M. Rashad,
\newblock ``Random coding bounds on the rate for list-decoding superimposed
  codes,''
\newblock {\em Problems Control Inform. Theory/Problemy Upravlen. Teor.
  Inform.}, vol. 19(2), pp. 141--149, 1990.

\bibitem{sobel71}
M.~Sobel, S.~Kumar, and S.~Blumenthal,
\newblock ``Symmetric binomial group-testing with three outcomes,''
\newblock in {\em Proc. Purdue Symp. Decision Procedure}, 1971, pp. 119--160.

\bibitem{hwang84}
F.~Hwang,
\newblock ``Three versions of a group testing game,''
\newblock {\em SIAM Journal on Algebraic Discrete Methods}, vol. 5, no. 2, pp.
  145--153, 1984.

\bibitem{CHEN20091581}
H-B. Chen and H-L. Fu,
\newblock ``Nonadaptive algorithms for threshold group testing,''
\newblock {\em Discrete Applied Mathematics}, vol. 157, no. 7, pp. 1581 --
  1585, 2009.

\bibitem{mahdi2010}
M.~Cheraghchi,
\newblock ``Improved constructions for non-adaptive threshold group testing,''
\newblock {\em CoRR}, vol. abs/1002.2244, 2010.

\bibitem{thach17}
T.~V. Bui, M.~Kuribayashi, M.~Cheraghchi, and I.~Echizen,
\newblock ``Efficiently decodable non-adaptive threshold group testing,''
\newblock {\em CoRR}, vol. abs/1712.07509, 2017.

\bibitem{inan17allerton}
H.~A. Inan, P.~Kairouz, and A.~Ozgur,
\newblock ``Sparse group testing codes for low-energy massive random access,''
\newblock in {\em 2017 55th Annual Allerton Conference on Communication,
  Control, and Computing (Allerton)}, Oct 2017, pp. 658--665.

\bibitem{gandikota2016sparse}
V.~Gandikota, E.~Grigorescu, S.~Jaggi, and S.~Zhou,
\newblock ``Nearly optimal sparse group testing,''
\newblock in {\em 2016 54th Annual Allerton Conference on Communication,
  Control, and Computing (Allerton)}, Sept 2016, pp. 401--408.

\bibitem{inan18isit}
H.~A. Inan, P.~Kairouz, and A.~Ozgur,
\newblock ``Energy-limited massive random access via noisy group testing,''
\newblock in {\em Information Theory (ISIT), 2018 IEEE International Symposium
  on}. IEEE, 2018, pp. 1101--1105.

\bibitem{raz}
T.~Hartman and R.~Raz,
\newblock ``On the distribution of the number of roots of polynomials and
  explicit weak designs,''
\newblock {\em Random Structures \& Algorithms}, vol. 23, no. 3, pp. 235--263,
  2003.

\bibitem{bardenet2015}
R.~Bardenet and O-A. Maillard,
\newblock ``Concentration inequalities for sampling without replacement,''
\newblock {\em Bernoulli}, vol. 21, no. 3, pp. 1361--1385, 08 2015.

\end{thebibliography}
\end{document}